\newtheorem{theorem}{Theorem}
\newtheorem{lemma}{Lemma}
\newtheorem{corollary}{Corollary}
\newtheorem{remark}{Remark}
\begin{document}
\title{Optimal Tracking Performance Limitation of Networked Control Systems with Limited \\Bandwidth and Additive Colored White \\ Gaussian Noise}
\author{Zhi-Hong~Guan,
        Chao-Yang~Chen,
        Gang~Feng,%~\emph{Fellow,~IEEE}
        ~and~Tao Li
\thanks{Zhi-Hong Guan and Chao-Yang Chen are with the Department of Control Science and Engineering, Huazhong University of Science and Technology, Wuhan, 430074, P. R. China.}

\thanks{Gang Feng is with the Department of Mechanical and Biomedical Engineering, City University of Hong Kong, Kowloon, Hong Kong, SAR, P. R. China.}

\thanks{Tao Li is with the College of Electronics and Information, Yangtze University, Jingzhou, 434023, P. R. China.}

\thanks {Corresponding Author: zhguan@mail.hust.edu.cn (Z.H. Guan).}

\thanks{This work was supported in part by the National Natural Science Foundation of China under Grants 60834002, 60973012, 61073065, 61100076 and the Doctoral Foundation of Ministry of Education of China under Grant 20090142110039 and by a grant from the Research Grants Council of the Hong Kong Special Administrative Region, China (Project No.: CityU 113311).}

%\thanks{Copyright (c) 2013 IEEE. Personal use of this material is permitted. However, permission to use this material for any other purposes must be obtained from the IEEE by sending an email to pubs-permissions@ieee.org.}
}

\markboth{IEEE TRANSACTIONS ON CIRCUITS AND SYSTEMS¡ªI: REGULAR PAPERS}%
{Shell \MakeLowercase{\textit{et al.}}: Bare Demo of IEEEtran.cls for Journals}

\maketitle
\begin{abstract}
%\boldmath
This paper studies optimal tracking performance issues for multi-input-multi-output linear time-invariant systems under networked control with limited bandwidth and additive colored white Gaussian noise channel. The tracking performance is measured by control input energy and the energy of the error signal between the output of the system and the reference signal with respect to a Brownian motion random process. This paper focuses on two kinds of network parameters, the basic network parameter-bandwidth and the additive colored white Gaussian noise, and studies the tracking performance limitation problem. The best attainable tracking performance is obtained, and the impact of limited bandwidth and additive colored white Gaussian noise of the communication channel on the attainable tracking performance is revealed. It is shown that the optimal tracking performance depends on nonminimum phase zeros, gain at all frequencies and their directions unitary vector of the given plant, as well as the limited bandwidth and additive colored white Gaussian noise of the communication channel. The simulation results are finally given to illustrate the theoretical results.
\end{abstract}

\begin{IEEEkeywords}
Networked control systems, bandwidth, additive colored white Gaussian noise, performance limitation.
\end{IEEEkeywords}
\IEEEpeerreviewmaketitle

\section{Introduction}
\IEEEPARstart{M}{ore}
and more researchers are interested in networked control systems in the past decade, please see, for example, \cite{braslavsky2007feedback,li2009optimal,xisheng2010performance,rojas2008fundamental,xiao2010feedback,menon2010static,guan2011optimal} and references therein. Most works focus on analysis and synthesis of networked control systems with quantization effects (e.g. \cite{Zhang2011Quantized,Azuma2012Dynamic,qi2009optimal,you2009optimality,Xiao2010stabilization}), time delays\cite{Luan2011Stabilization,Wei2009Filtering,Liu2010Predictive}, bandwidth constraint\cite{rojas2006output,rojas2008fundamental,Trivellato2010State}, data rate constraint \cite{rojas2006output,braslavsky2007feedback}, and/or data packet dropout\cite{Trivellato2010State,wu2007design,Wang2011a,you2011mean}.
In spite of the significant progress in those studies, the more inspiring and challenging issues of control performance limitation under such network environment remain largely open.
\par
Performance limitations resulting from nonminimum phase (NMP) zeros and unstable poles of given systems have been known for a long time. The issue has been attracting a growing amount of interest in the control community, see   \cite{toker2002tracking,chen2003best,bakhtiar2008regulation,wang2011optimal} for example. The tracking performance achievable via feedback was studied in \cite{morari1989robust} with respect to single-input-single-output (SISO) stable systems.  The result was extended to multi-input-multi-output (MIMO) unstable systems in \cite{chen2000limitations}, and it was found that the minimal tracking error depends not only on the location of the system nonminimum phase zeros, but also on how the input signal may interact with those zeros, i.e., the angles between the input and zero directions. Optimal tracking and regulation control problems were studied in \cite{chen2003best}, where objective functions of tracking error and regulated response, defined by integral square measures, are minimized jointly with the control effort, and the latter is measured by the system input energy. In \cite{wang2011optimal}, the optimal tracking control problem was studied with both the forward and feedback channel disturbances. The authors of \cite{bakhtiar2008regulation} investigated the regulation performance limitations of unstable non-minimum phase single-input-multi-output (SIMO) continuous-time and discrete-time systems, respectively. However, all these mentioned works have not taken into account the effects of networks, which would make the study of the optimal performance limitation much more challenging.
\par
Networked control systems are ubiquitous in industry. More and more control systems are operating over a network. In recent years, the research on the performance limitation of networked control systems attracts some attention. For example, the authors in \cite{qi2009optimal} studied the tracking performance of discrete-time SISO networked feedback systems, by modeling the quantization error as a white noise. The tracking performance of continuous-time MIMO  systems with the additive white Gaussian noise (AWGN) was studied through one- and two-parameter control schemes in \cite{ding2010tracking,Zhan2012Optimal}. The result was further generalized to other noisy channels with bandwidth limitation in [8], where the optimal tracking performance is measured by the achievable minimal tracking error. However, it was showed in [24] that, in the optimal tracking problem, in order to attain the minimal tracking error, the control input of systems is often required to have an infinite energy. This requirement cannot be met in general in practice. Thus the control input energy of systems should be considered in the performance index to address this issue. In this paper, we consider the optimal tracking problem in terms of both the  tracking error energy and the control input energy, and meanwhile we consider communication link over bandwidth-limited additive colored Gaussian noise (ACGN) channels, which are more realistic models of communication link than those in \cite{braslavsky2007feedback,wang2011optimal}.
\par
In this paper, we study optimal tracking performance issues pertaining to MIMO feedback control systems. The objective is to minimize the tracking error between the output and the reference signals of a feedback system under the constraint of control input energy. The optimal tracking performance is attained by stabilizing compensators under a two-parameter structure. The tracking error is defined in an square error sense, and the reference signals are considered as a Brownian motion, which can be roughly considered as the integral of a standard white noise \cite{qiu2002fundamental,li2009optimal,ding2010tracking}. The tracking performance index is given by the weighted sum between the power of the tracking error energy and the system input energy.
\par
The rest of the the paper is organized as follows. The problem formulation and preliminaries are given in section II. In section III, the main results of this paper are presented. Results of extensive simulation studies and discussions are shown to validate the theoretical results in section IV. Concluding remarks are made in Section V.

\section{Preliminaries}
We begin by summarizing briefly the notations used throughout this paper. For any complex number $s$, we denote its complex conjugate by $s^H$. The expectation operator is denoted by $E\{\cdot\}$, respectively. For any vector $u$, we denote its conjugate transpose by $u^H$, and its Euclidean norm by $\left\| u \right\|$. For a matrix $A$, we denote its conjugate transpose by ${A^H}$. All the vectors and matrices involved in the sequel are assumed to have compatible dimensions, and for simplicity their dimensions will be omitted. Let the open right-half plane be denoted by $\mathbb{C}_+:=\{s:Re(s)>0\}$,
the open left-half plane by $\mathbb{C}_-:=\{s:Re(s)<0\} $, and the imaginary axis by $\mathbb{C}_0$. Define
${L_2}: = \{ f:f(s){\rm{~measurable~in~}}{\mathbb{C}_0}, {\| f \|_2^2}:=\frac{1}{{2\pi }}{\int_{-\infty}^\infty{\|{f(j\omega)} \|}_F^{2}}\rm{d}\omega<\infty\}.$ Then, ${L_2}$ is a Hilbert space with\\ an inner product $\langle{f,g}\rangle:=\frac{1}{{2\pi}}\int_{-\infty}^{\infty} \mathrm{tr}\{{f^H}(j\omega)g(j\omega)\}\rm{d}\omega.$ Next, define $H_2$ as a subspace of functions in $L_2$ with functions $f(s)$ analytic in $\mathbb{C}_-$, ${H_2}: =\{f:f(s){\rm{~analytic~in~C_,}}\\ {\|f\|_2^2}:=\sup_{\sigma>0}\frac{1}{{2\pi }}{\int_{-\infty}^\infty\|{f(\sigma+j\omega)}\|_F^{2}}\rm{d}\omega<\infty\}.$
and the orthogonal complement of $H_2$ in $L_2$ as $H_2^\bot$:
${H_2^\bot}:=\{f:f(s){\rm{analytic}}{\rm{in~C_,\,}}{\|f\|_2^2}:=\sup_{\sigma<0}\frac{1}{{2\pi }}{\int_{-\infty}^\infty\|{f(\sigma+j\omega)}\|_F^{2}}\rm{d}\omega\\<\infty\}.$
Thus, for any $f \in H_2^ \bot $ and $g \in {H_{2,}}\left\langle {f,g} \right\rangle = 0$.
We use the same notation ${\left\|\cdot\right\|_2}$ to denote the corresponding norm.
Finally, we denote by $\mathbb{R}{\mathcal{H}_\infty}$  the class of all stable, proper rational transfer function matrices.
We introduce a factorization formula for non-minimum phase systems.
For the right-invertible rational transfer function matrix $P$,
let its right and left coprime factorizations be given by
\begin{equation}\label{eqq1}
P=NM^{-1}=\tilde{M}^{-1}\tilde{N},
\end{equation}
where $N,M,\tilde{M},\tilde{N}\in \mathbb{R}H_{\infty}$.
A complex number $s\in \mathbb{C}$ is said to be a zero of $P(s)$, if ${\eta^H}P(s)=0$ for some unitary vector $\eta$,
where $\eta$ is called an output direction vector associated with $s$, and $\left\| \eta  \right\| = 1$.
For such a zero, it is always true that $\eta^HN(s)=0$, for some unitary vector $\eta$.
On the other hand, a complex number is said to be a pole of P(s) if $P(p) = \infty$.
If $p$ is an unstable pole of $P(s)$, i.e., $p \in {\mathbb{C}_+}$,
then equivalent statement is that $\tilde M(p)\omega = 0$ for some unitary vector $\omega$, $\left\| \omega \right\| = 1$.
In order to facilitate the subsequent proof, we introduce two specific factorization for $N(s): N(s)=L(s){N_m}(s)=\hat{L}(s){\hat{N}_m}(s).$
And, allpass factor $L(s)$ and $\hat{L}(s)$ have the form
$L(s):=\prod_{i=1}^{n_z}{L_i}(s),\hat{L}(s):=\prod_{i=1}^{n_z}{\hat{L}_i}(s),$
and
\begin{align}\label{eq2}
   {L_i}(s):=&{[\eta_i~~U_i]
\left[
  \begin{array}{cc}
    \frac{\bar{z}_i}{z_i}\frac{z_i-s}{\bar{z}_i+s} & 0 \\
    0 & I
  \end{array}
\right]
  \left[
    \begin{array}{c}
      \eta_i^H \\
      U_i^H
    \end{array}
  \right]},\\
  \label{eq3}{\hat{L}_i}(s):=&{[\hat{\eta}_i~~\hat{U}_i]
\left[
  \begin{array}{cc}
    \frac{s-z_i}{s+\bar{z}_i} & 0 \\
    0 & I
  \end{array}
\right]
  \left[
    \begin{array}{c}
      \hat{\eta}_i^H \\
      \hat{U}_i^H
    \end{array}
  \right]},
\end{align}
where ${\eta_i}$ are unitary vectors obtained by factorizing the zeros one at a time, and ${U_i}$ are matrices which together with ${\eta _i}$ form a unitary matrix. Similarly, ${\hat{\eta}_i}$ and ${\hat{U}_i}$ have same definition and nature.\par
Likewise, $\tilde{M}$ has the allpass factorization
$\tilde{M}=\tilde{M}_m(s)$ $\times\tilde{B}(s),$
where $\tilde{B}(s)$ is an allpass factor and $\tilde{M}_m(s)$ is the minimum phase part of $\tilde{M}(s)$. One particular allpass factor is given by
$\tilde{B}(s):=\prod_{i=1}^{n_p}{\tilde{B}_i}(s),$
and
\begin{align}\label{BBi}
  {\tilde{B}_i}(s):=&{[\tilde{\omega}_i~~\tilde{W}_i]
\left[
  \begin{array}{cc}
    \frac{s-p_i}{s+\bar{p}_i} & 0 \\
    0 & I
  \end{array}
\right]
  \left[
    \begin{array}{c}
      \tilde{\omega}_i^H \\
      \tilde{W}_i^H
    \end{array}
  \right]}.
\end{align}
Consider the class of functions in
$
\mathbb{F}:=\{f:f(s)~{\rm{analytic~in}}\\\mathbb{C}_+,  \lim_{R\rightarrow\infty}\max_{\theta\in[-\pi/2,\pi/2]}\|f(Re^{j\theta})/{R}\|=0\}.
$. Lemma \ref{le1} and \ref{le2} can be found in \cite{chen2003best}.
\begin{lemma} \label{le1}
Let $f(s)\in\mathbb{F}$ and denote
$f(j\omega)=h_1(\omega)+jh_2(\omega).$
Suppose that $f(s)$ is conjugate symmetric, i.e., $f(s)=\overline{f(\bar{s})}$. Then
 $f^\prime(0)=({1}/{\pi})\int_\infty^\infty\big({h_1(\omega)-h_1(0)}\big)/{\omega^2}\rm{d}\omega\,.$
\end{lemma}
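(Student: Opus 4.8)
The plan is to prove this by a contour-integration (dispersion-relation) argument. First I would record the consequences of conjugate symmetry: evaluating $f(s)=\overline{f(\bar s)}$ at $s=j\omega$ gives $f(-j\omega)=\overline{f(j\omega)}$, so that $h_1$ is even and $h_2$ is odd; in particular $h_2(0)=0$, whence $f(0)=h_1(0)$ is real, and differentiating the symmetry relation shows $f'(0)$ is real as well. This is what makes a real-valued formula for $f'(0)$ plausible.

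The main device is the auxiliary function
\[
F(s):=\frac{f(s)-f(0)}{s^2},
\]
which is analytic throughout $\mathbb{C}_+$ and has at $s=0$ a simple pole with residue $f'(0)$, since $f(s)-f(0)=f'(0)s+O(s^2)$. I would then integrate $F$ over the positively oriented boundary of the right half-disk $\{|s|<R,\ \mathrm{Re}\,s>0\}$, indented into $\mathbb{C}_+$ by a small semicircle $C_\epsilon$ of radius $\epsilon$ about the origin so that the pole is excluded. Because $F$ is analytic inside this contour, Cauchy's theorem makes the total integral zero, and it remains to identify the three contributions as $R\to\infty$ and $\epsilon\to 0$.

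On the large arc $C_R$ the integrand is bounded by $(|f|+|f(0)|)/R^2$, so its contribution is at most $(\pi/R)(\max_{C_R}|f|+|f(0)|)$; the defining growth condition of $\mathbb{F}$, namely $\max_\theta\|f(Re^{j\theta})\|/R\to 0$, forces this to vanish. The small arc $C_\epsilon$, traversed clockwise through an angle $\pi$, contributes $-j\pi f'(0)$ by the standard fractional-residue estimate. The two imaginary-axis segments combine to $-j$ times the principal-value integral of $F(j\omega)$, where on the axis $F(j\omega)=-\big[(h_1(\omega)-h_1(0))+jh_2(\omega)\big]/\omega^2$. Setting the sum of the three pieces to zero and separating real and imaginary parts, the odd integrand $h_2(\omega)/\omega^2$ integrates to zero, while $h_1(\omega)-h_1(0)=O(\omega^2)$ near $0$ (again by evenness of $h_1$) makes the remaining integral converge in the ordinary sense, leaving $f'(0)=\frac{1}{\pi}\int_{-\infty}^\infty(h_1(\omega)-h_1(0))/\omega^2\,d\omega$.

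The step I expect to require the most care is handling the origin and the large-radius limit together: one must justify that $f$ is regular enough on the imaginary axis near $0$ for the expansion $f(s)-f(0)=f'(0)s+O(s^2)$ and the fractional-residue estimate to apply, and one must invoke the $\mathbb{F}$ growth bound precisely to kill the outer arc. The parity bookkeeping (even $h_1$, odd $h_2$) is what ultimately makes the imaginary contribution drop out and renders the right-hand side real, consistent with $f'(0)\in\mathbb{R}$.
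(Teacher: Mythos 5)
Your proof is correct. Note that the paper itself gives no proof of this lemma --- it is quoted from \cite{chen2003best} --- and the argument there is precisely your indented right-half-disk Cauchy/dispersion argument applied to $(f(s)-f(0))/s^2$: the $\mathbb{F}$ growth bound kills the outer arc, the clockwise indentation yields $-j\pi f'(0)$, and the parity of $h_1,h_2$ (forced by conjugate symmetry) eliminates the imaginary part and gives convergence at the origin, exactly as you describe.
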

\begin{lemma}\label{le2}
Consider a conjugate symmetric function $f(s)$. Suppose that $f(s)$ is analytic and has no zero in $\mathbb{C}_+$, and that $\log{f(s)}\in\mathbb{F}$. Then
provided that ${f^\prime(0)}/{f(0)}=({1}/{\pi})\int_{-\infty}^{+\infty}
({1}/{\omega^2})\log|{f(j\omega)}/{f(0)}|\mathrm{d}\omega\,,f(0)\neq0.$
 \end{lemma}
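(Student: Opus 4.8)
The plan is to deduce Lemma \ref{le2} from Lemma \ref{le1} by applying the latter to the logarithm $g(s):=\log f(s)$. The point is that $g'(0)=f'(0)/f(0)$ by the chain rule, while the real part of $g$ on the imaginary axis is exactly $\log|f(j\omega)|$; hence the integrand $(h_1(\omega)-h_1(0))/\omega^2$ produced by Lemma \ref{le1} becomes $\log|f(j\omega)/f(0)|/\omega^2$, which is precisely the right-hand side of the claimed identity.

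First I would check that $g$ satisfies the hypotheses of Lemma \ref{le1}. Since $f$ is analytic and zero-free in $\mathbb{C}_+$, a single-valued analytic branch of $\log f$ exists there, and the hypothesis $\log f\in\mathbb{F}$ is assumed outright, so $g\in\mathbb{F}$. For conjugate symmetry, note that $f(s)=\overline{f(\bar s)}$ forces $f$ to be real on the real axis; in particular $f(0)\neq0$ is real, so we may fix the branch of the logarithm by requiring $g$ to be real at $s=0$. With this normalization the conjugate-symmetric relation $f(s)=\overline{f(\bar s)}$ lifts to $g(s)=\overline{g(\bar s)}$, as required by Lemma \ref{le1}.

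Next I would write $g(j\omega)=h_1(\omega)+jh_2(\omega)$, so that $h_1(\omega)=\log|f(j\omega)|$ and $h_2(\omega)=\arg f(j\omega)$. Evaluating at $\omega=0$ gives $h_1(0)=\log|f(0)|$, and therefore $h_1(\omega)-h_1(0)=\log|f(j\omega)/f(0)|$. Applying Lemma \ref{le1} to $g$ then yields
\[
\frac{f'(0)}{f(0)}=g'(0)=\frac{1}{\pi}\int_{-\infty}^{\infty}\frac{\log|f(j\omega)/f(0)|}{\omega^2}\,\mathrm{d}\omega,
\]
which is the assertion.

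The main obstacle I anticipate is the branch and conjugate-symmetry bookkeeping in the second step: one must confirm that a globally consistent analytic branch of $\log f$ can be chosen on $\mathbb{C}_+\cup\mathbb{C}_0$ (so that $g$ is both analytic in $\mathbb{C}_+$ and well defined on the imaginary axis where $h_1,h_2$ are read off), and that this branch is compatible with conjugate symmetry. The zero-free hypothesis on $f$ together with the growth condition packaged in $\log f\in\mathbb{F}$ is exactly what guarantees this; once the branch is pinned down, the remaining computation is the routine chain-rule and modulus identification above.
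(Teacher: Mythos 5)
Your proof is correct and is essentially the paper's own argument: the paper does not prove Lemma~\ref{le2} itself but attributes it to \cite{chen2003best}, where it is obtained exactly as you do, by applying Lemma~\ref{le1} to $g=\log f$ so that $g'(0)=f'(0)/f(0)$ and $h_1(\omega)-h_1(0)=\log|f(j\omega)|-\log|f(0)|=\log|f(j\omega)/f(0)|$. One small caveat: your branch normalization (making $g$ real at $s=0$) tacitly assumes $f(0)>0$; when $f(0)<0$ the conjugate-symmetric branch cannot be real at the origin, but the case is covered by applying the same argument to $-f$, since both sides of the identity are invariant under $f\mapsto -f$.
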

\begin{lemma}\label{le3}
Let $L$ and $L_i$ be defined by (\ref{eq3}). Then, for any $X\in{\mathbb{R}{\mathcal{H}_\infty}}$, the equality
$ XL^{-1}=S+\sum_{i=1}^{N_z}X(z_i)L_1^{-1}(z_i)\\
\cdots L_{i-1}^{-1}(z_i)L_i^{-1}L_{i+1}^{-1}(z_i)\cdots L_{N_z}^{-1}(z_i)$
 holds for some $S\in \mathbb{R}\mathcal{H}_\infty$.
\end{lemma}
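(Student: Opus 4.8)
My plan is to establish the identity by a partial-fraction argument that matches the principal parts of $XL^{-1}$ at each right-half-plane zero $z_i$. First I would collect the structural facts. Since $X\in\mathbb{R}\mathcal{H}_\infty$ is analytic and bounded in $\overline{\mathbb{C}_+}$ and each $L_i^{-1}$ is proper with its only pole in $\mathbb{C}_+$ located at $z_i$, the product $XL^{-1}=XL_1^{-1}\cdots L_{N_z}^{-1}$ is a proper rational matrix whose poles in $\mathbb{C}_+$ are precisely $z_1,\dots,z_{N_z}$. From the definition (\ref{eq3}) and the unitarity of $[\hat{\eta}_i~~\hat{U}_i]$ one has $L_i^{-1}(s)=\hat{\eta}_i\frac{s+\bar{z}_i}{s-z_i}\hat{\eta}_i^H+\hat{U}_i\hat{U}_i^H$, so $L_i^{-1}$ has a simple pole at $z_i$ with the rank-one residue $2\,\mathrm{Re}(z_i)\,\hat{\eta}_i\hat{\eta}_i^H$, while every other factor $L_j^{-1}$, $j\neq i$, is analytic at $z_i$ (the zeros being taken distinct, which is what makes the evaluations $L_j^{-1}(z_i)$ in the statement finite).

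The core step is to verify that the proposed sum reproduces these principal parts one zero at a time. Fixing $k$ and expanding $XL^{-1}$ near $z_k$, the pole is carried solely by $L_k^{-1}(s)$; the remaining analytic factors may be frozen at $z_k$, so the residue of $XL^{-1}$ at $z_k$ equals $X(z_k)\,L_1^{-1}(z_k)\cdots L_{k-1}^{-1}(z_k)\,[\,2\,\mathrm{Re}(z_k)\hat{\eta}_k\hat{\eta}_k^H\,]\,L_{k+1}^{-1}(z_k)\cdots L_{N_z}^{-1}(z_k)$. The $k$-th summand $X(z_k)L_1^{-1}(z_k)\cdots L_{k-1}^{-1}(z_k)\,L_k^{-1}(s)\,L_{k+1}^{-1}(z_k)\cdots L_{N_z}^{-1}(z_k)$ depends on $s$ only through $L_k^{-1}(s)$, hence has a pole only at $z_k$, and its residue there is the same expression. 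Thus the two principal parts coincide for every $k$.

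I would then define $S:=XL^{-1}-\sum_{i=1}^{N_z}X(z_i)L_1^{-1}(z_i)\cdots L_{i-1}^{-1}(z_i)L_i^{-1}L_{i+1}^{-1}(z_i)\cdots L_{N_z}^{-1}(z_i)$ and conclude that $S\in\mathbb{R}\mathcal{H}_\infty$. Each summand contributes a pole only at its own $z_i$, and by the previous step its principal part cancels that of $XL^{-1}$; hence $S$ has no pole anywhere in $\overline{\mathbb{C}_+}$. As a difference of proper rational matrices $S$ is proper, and it is real rational because the zeros occur in conjugate pairs with conjugate directions so that the conjugate summands combine into a real quantity; therefore $S$ is analytic and bounded in the closed right-half plane, i.e. $S\in\mathbb{R}\mathcal{H}_\infty$, as claimed.

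The step I expect to be the main obstacle is the residue computation in the noncommutative matrix setting: one must check that only the leading rank-one term of $L_k^{-1}$ feeds the $1/(s-z_k)$ part of the product and that the analytic factors to the left and right can be replaced by their values at $z_k$ while preserving their order relative to the residue $\hat{\eta}_k\hat{\eta}_k^H$, so that the residue factors exactly as written. An equivalent route is to first prove the single-factor identity $YL_i^{-1}=Y(z_i)L_i^{-1}+S_0$ with $S_0\in\mathbb{R}\mathcal{H}_\infty$ for $Y\in\mathbb{R}\mathcal{H}_\infty$, observing that $(Y-Y(z_i))L_i^{-1}$ cancels the pole, and then apply it iteratively by peeling off one allpass factor at a time; the bookkeeping of how the extracted constant matrices thread through the remaining factors is what must be controlled, and it reduces to the same residue identity.
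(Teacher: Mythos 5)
Your proof is correct, but it takes a genuinely different route from the paper's. The paper does not argue from first principles at all: it cites Lemma 4.1 of \cite{wang2009limitations}, which gives the mirror-image (left-multiplication) identity
$A^{-1}Y=S_1+\sum_{i=1}^{N_z}A_{N_z}^{-1}(z_i)\cdots A_{i+1}^{-1}(z_i)A_i^{-1}A_{i-1}^{-1}(z_i)\cdots A_1^{-1}(z_i)Y(z_i)$
for $Y\in\mathbb{R}\mathcal{H}_\infty$, and then obtains the stated right-multiplication version by conjugate transposition, setting $L=A^H$, $X=Y^H$, $S=S_1^H$. Your argument instead proves the identity directly: you compute the rank-one residue $2\,\mathrm{Re}(z_i)\,\hat{\eta}_i\hat{\eta}_i^H$ of each $L_i^{-1}$, use the fact that a simple pole carried by one factor of a matrix product lets the analytic factors be frozen at the pole (preserving order), match principal parts term by term, and conclude that the difference $S$ is proper, pole-free in the closed right-half plane, and real rational. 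Both routes are sound, and your residue bookkeeping (including the observation that distinctness of the $z_i$ is what makes the evaluations $L_j^{-1}(z_i)$ finite) is exactly right under the ordering convention $L^{-1}=L_1^{-1}\cdots L_{N_z}^{-1}$, which is the one consistent with the summand ordering in the statement. What the paper's approach buys is brevity, at the cost of importing an external lemma and of a step that is formally delicate: Hermitian-transposing a matrix function of a complex variable is not an analytic operation (properly one should use the transpose/para-conjugate, under which the evaluation points are conjugated, and then appeal to the zeros occurring in conjugate pairs); the paper glosses over this. Your proof is longer but self-contained, makes the cancellation mechanism explicit, and actually addresses the realness of $S$, which the paper's proof never discusses; moreover, the iterative single-factor reduction you sketch at the end ($YL_i^{-1}=Y(z_i)L_i^{-1}+S_0$, peeling off one allpass factor at a time) is essentially how the cited external lemma is itself proved, so your proposal subsumes the paper's missing ingredient rather than assuming it.
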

\begin{proof}
We assume that $A$ is an allpass factor.
From lemma 4.1 in \cite{wang2009limitations}, for some $Y\in{\mathbb{R}\mathcal{H}_\infty}$, we have $A^{-1}Y=S_1+\sum_{i=1}^{N_z}A_{N_z}^{-1}(z_i)\cdots A_{i+1}^{-1}(z_i)
A_i^{-1}A_{i-1}^{-1}(z_i)\cdots A_1^{-1}(z_i)Y(z_i).$
Then, we have
$Y^HA^{-H}=S_1^H+\sum_{i=1}^{N_z}Y^H(z_i)A_1^{-H}(z_i)\cdots\\ A_{i-1}^{-H}(z_i) A_i^{-H}A_{i+1}^{-H}(z_i)\cdots A_{N_z}^{-H}(z_i).$
Let $L=A^H, S=S_1^H$, $X=Y^H,$ then
$XL^{-1}=S+\sum_{i=1}^{N_z}X(z_i)L_1^{-1}(z_i)L_{i-1}^{-1}(z_i)\cdots\\ L_i^{-1}
L_{i+1}^{-1}(z_i)\cdots L_{N_z}^{-1}(z_i).$
Therefore, the proof is completed.
\end{proof}
\section{Tracking performance limitations}
Consider the control feedback loop shown in Fig.1, where the plant model $P$ is a rational transfer function matrix. The channel model is the bandwidth-limited ACGN channel, where  $n=[n_1,n_2,\cdots,n_l]$ with $n_i(1\leq{i}\leq{l})$ being a zero-mean stationary white Gaussian noise process and spectral density $\gamma_i^2$ (when $l=1$, note $\gamma=\gamma_1$). The reference signal $r$ is a vector of the step signal generated by passing a standard white noise $\omega$ through an integrator, which can be roughly considered as a Brownian motion process\cite{qiu2002fundamental} and emulate the step signal in the deterministic setting\cite{li2009optimal,ding2010tracking}. Therefore, the formulation resembles the tracking of a deterministic step signal. For the channel $i$, we denote the spectral density of $w_i$ by $\sigma_i^2$ (when $l=1$, note $\sigma=\sigma_1$). It is assumed that the system reference inputs in different channels are independent, and that the reference input and the noise are uncorrelated.
\begin{figure}[h]
\centering
  \includegraphics[width=8.5cm]{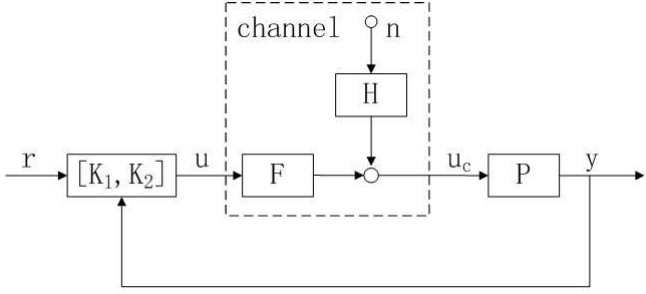}
  \caption{Feedback control over bandwidth limited ACGN channels}\label{Fig.1}
\end{figure}
 $[K_1,~K_2]$ denotes the two-parameter compensators. The communication channel is characterized by three parameters: the AWGN n, the channel transfer functions F and H. The channel transfer function $F(s)\in{\mathbb{R}\mathcal{H}_\infty}$ modeling the bandwidth limitation is assumed to be stable and NMP \cite{rojas2008fundamental}. Then
$F(s)=diag[ f_1(s),f_2(s), \cdots , f_l(s)],$
where $f_1(s)=f_2(s)=\cdots=f_l(s)$.  $F(s)$ has $n_f$ distinct NMP zeros. The channel transfer function $H(s)\in{\mathbb{R}\mathcal{H}_\infty}$, colors the additive white Gaussian noise.
 The performance index of the system is defined as
\begin{equation}\label{eq5}
    J=E[(1-\epsilon)(r(t)-y(t))^T(r(t)-y(t))+\epsilon u_c^T(t)u_c(t)],
\end{equation}
where the parameter $\epsilon~(0\leq\epsilon\leq1)$ is pre-set and can be used to weigh the relative importance
of tracking objective and the plant control input energy constraint.
For the transfer function matrices $P$ and $PF$, let their right and left coprime factorizations respectively be given by
 $PF = N{M^{ - 1}} = {\tilde M^{ - 1}}\tilde N,~~P=N_0M^{-1},$
where $N,N_0,M,\tilde N,\tilde M \in \mathbb{R}{\mathcal{H}_\infty},$
and satisfy the double Bezout identity
\begin{equation}\label{eq6}
    \left[ {\begin{array}{cc}
{\tilde X }&{ - \tilde Y}\\
{ - \tilde N}&{\tilde M}
\end{array}} \right]\left[ {\begin{array}{*{20}{c}}
M&Y\\
N&X
\end{array}} \right] = I.
\end{equation}
Then the set of all stabilizing two parameter compensators is characterized by
\begin{multline*}
{\mathcal{K}_s}: = \{ K:K = [K_1~~K_2]= (\tilde{X} - R\tilde{N})^{-1}\\
\times[Q~~\tilde{Y}-R\tilde{M}], Q\in{\mathbb{R}\mathcal{H}_\infty},R\in{\mathbb{R}\mathcal{H}_\infty}\}.
\end{multline*}
According to (\ref{eq5}), we may rewrite the performance index $J$ as
\begin{multline}\label{eq7}
    J: = E\bigg\{ (1-\epsilon)({{\left\| {r(t)-y_r(t)} \right\|}^2}\\
    +{{\left\|{y_n(t)}\right\|}^2})+\epsilon{{\left\| {{u}_c(t)} \right\|}^2} \bigg\},
\end{multline}
where $y_{r}(t)$ and $y_{n}(t)$ are the outputs in response to $r$ and $n$,  respectively. And the tracking error $\hat{e}$ is given by
\begin{equation*}%\label{eq8}
\hat{e}(t):=r(t)-y_r(t).
\end{equation*}
The optimal performance attainable by all possible stabilizing controllers is
\begin{equation*}%\label{eq10}
    {J^*}: = \mathop {\inf }\limits_{k \in {K_s}} {\rm{ }}J.
\end{equation*}
\begin{theorem}\label{th1}
Let $\omega$ and $n$ be uncorrelated white Gaussian signals. Suppose that $P(s)=P_o(s)/s^{n},$ for some integer $n\geq{1}$, such that $P_o(s)$ is proper and has no zero at $s=0$. $P$ is supposed to be unstable, NMP and invertible (including right invertible and left invertible).
Denote the NMP zeros of $P(s)$ and $F(s)$ by $z_i,(i=1,\cdots,n_z+n_f)$ and assume also that these zeros are distinct. Define
$f(s):=\mathrm{tr}\{(1-\epsilon)U^TN_m(s)\\\Theta_o^{-1}(s)\Theta_o^{-T}(0)N_m^T(0)U\}$
and factorize $f(s)$ as $f(s):=\big(\prod_{i=1}^{N_s}{\bar{s}_i(s_i-s)}{(s_i(\bar{s}_i+s))}\big)f_m(s),$
where $s_i\in\mathbb{C}_+$ are the nonminimum phase zeros of $f(s)$ and $f_m(s)$ is minimum phase. It is noted that, $f(s),
f_m(s)\in\mathbb{R}\mathcal{H}\infty$, $ f(0)=f_m(0)=\sum_{i=1}^l\sigma_i^2.$
Then, with the two-parameter controller given in Fig.\ref{Fig.1}
\begin{align*}
J^*=&2(1-\epsilon)\Bigg[\sum_{i=1}^{n_z+n_f}\frac{{\mathrm{Re}}(z_i)}{|z_i|^2}
\sum_{j=1}^l\sigma_j^2\cos^2\angle(\eta_i,e_j)\\
&+(\sum_{i=1}^l{\sigma_i^2})\left(\sum_{i=1}^{N_s}\frac{\mathrm{Re}s_{i}}{|s_{i}|^2}
-\frac{1}{\pi}\int_{0}^{+\infty}\frac{\log{|f(j\omega)|}}{\omega^2}\mathrm{d}\omega\right)\Bigg]\\
&+\sum_{i,j=1}^{n_z+n_f}\frac{4Re(z_i)Re(z_j)}{\bar{z}_i+z_j}{\omega}_j^HD^r_i(z_j){D}_i^{rH}(z_i){\omega}_i\\
& \times{\omega}_i^H{D_i^l}^H(z_i)V^HO^H(z_i)O(z_j)VD_j^l(z_j){\omega}_j.
\end{align*}
\end{theorem}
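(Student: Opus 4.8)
The plan is to convert the controller optimization $J^{*}=\inf_{K\in\mathcal{K}_{s}}J$ into a family of standard $H_{2}$ model-matching problems, solve each by orthogonal projection, and evaluate the optimal residual norms through the allpass factorizations together with Lemmas \ref{le1}--\ref{le3}. First I would substitute the Youla parametrization $K=(\tilde{X}-R\tilde{N})^{-1}[Q\;\;\tilde{Y}-R\tilde{M}]$ into the closed-loop maps so that the responses $y_{r}$, $y_{n}$ and the control input $u_{c}$ all become affine in the free parameters $Q,R\in\mathbb{R}\mathcal{H}_{\infty}$. Since $\omega$ and $n$ are uncorrelated and the channel references are independent, Parseval's identity lets me write $J$ in \eqref{eq7} as a sum of three mutually independent $L_{2}$-norms: the tracking term $(1-\epsilon)\|r-y_{r}\|_{2}^{2}$, the noise term $(1-\epsilon)\|y_{n}\|_{2}^{2}$, and the control term $\epsilon\|u_{c}\|_{2}^{2}$, with the spectral densities $\sigma_{i}^{2}$ and the integrator $1/s$ generating the Brownian reference entering as fixed weights.

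Next I would treat the noise term by inserting the coprime factorization $PF=NM^{-1}=\tilde{M}^{-1}\tilde{N}$ and the allpass factorizations $N=LN_{m}=\hat{L}\hat{N}_{m}$, $\tilde{M}=\tilde{M}_{m}\tilde{B}$ from \eqref{eq2}--\eqref{BBi}. The key point is that the minimum-phase factors $N_{m},\hat{N}_{m},\tilde{M}_{m}$ can be inverted and shaped freely by $Q$ and $R$, whereas the allpass factors $L,\hat{L},\tilde{B}$ are not invertible in $\mathbb{R}\mathcal{H}_{\infty}$ and impose hard interpolation constraints at the NMP zeros $z_{i}$ and unstable poles $p_{i}$. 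Splitting the noise map into an $H_{2}^{\perp}$ component fixed by these allpass factors and an $H_{2}$ component that is freely assignable, orthogonality makes the two contributions add in squared norm, and the minimizing parameter annihilates the $H_{2}$ piece. Evaluating the surviving $H_{2}^{\perp}$ norm at each zero, the directional residue through $\eta_{i}$ yields the first sum $\sum_{i}\frac{\mathrm{Re}(z_{i})}{|z_{i}|^{2}}\sum_{j}\sigma_{j}^{2}\cos^{2}\angle(\eta_{i},e_{j})$, in which $\cos^{2}\angle(\eta_{i},e_{j})$ records the alignment of the zero output direction $\eta_{i}$ with the $j$-th channel axis $e_{j}$.

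The remaining scalar contribution to the noise response is encoded in $f(s)$, which I would factor as $f=\bigl(\prod_{i}\frac{\bar{s}_{i}(s_{i}-s)}{s_{i}(\bar{s}_{i}+s)}\bigr)f_{m}$. Here Lemmas \ref{le1} and \ref{le2} are decisive: applied to the conjugate-symmetric, zero-free minimum-phase factor $f_{m}$, they turn its logarithmic derivative at the origin into the Bode-type integral $\frac{1}{\pi}\int_{0}^{\infty}\frac{\log|f(j\omega)|}{\omega^{2}}\,\mathrm{d}\omega$, while the NMP zeros $s_{i}$ of $f$ contribute $\sum_{i}\frac{\mathrm{Re}\,s_{i}}{|s_{i}|^{2}}$. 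Combined with the total noise power $f(0)=\sum_{i}\sigma_{i}^{2}$, this produces the second bracketed term, with Lemma \ref{le3} supplying the partial-fraction expansion of $XL^{-1}$ that isolates the residues cleanly at the distinct zeros.

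Finally, the tracking term is handled by the same projection, except that now the $1/s$ reference factor interacts with the NMP zeros of both $P$ and $F$ and with the direction data $\omega_{i}$. Carrying out the orthogonal decomposition and computing the cross-residues between each pair of zeros $z_{i},z_{j}$ through the filtering matrices $D_{i}^{r},D_{i}^{l},O,V$ yields the double sum $\sum_{i,j}\frac{4\mathrm{Re}(z_{i})\mathrm{Re}(z_{j})}{\bar{z}_{i}+z_{j}}(\cdots)$. The hard part will be this last step: I must verify that one admissible pair $(Q,R)$ minimizes all three terms simultaneously, so that $J^{*}$ is attained jointly and the stated expression is exact rather than a mere upper bound, and then carry out the residue bookkeeping at the distinct complex zeros $z_{i},z_{j}$ without corrupting the direction projections. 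The delicate point is confirming that the cross terms between the tracking part and the noise part vanish under orthogonality, since it is precisely this cancellation that gives $J^{*}$ its clean additive structure.
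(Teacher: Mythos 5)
Your toolbox is the right one---Youla parametrization, orthogonal projection in $H_2$, allpass and inner--outer factorizations, Lemmas \ref{le1}--\ref{le3}, residue bookkeeping---but the decomposition you start from is not the one that makes the problem tractable, and the difficulty you flag at the end (``verify that one admissible pair $(Q,R)$ minimizes all three terms simultaneously'') is a genuine hole in your route, not a technicality. You group $J$ as tracking $+$ noise $+$ control; the control term $\epsilon\|u_c\|^2$ contains the responses to \emph{both} $r$ and $n$, hence depends on both $Q$ and $R$, and summing separately minimized pieces then only yields a bound on $J^*$ unless you exhibit a common minimizer, which you never do. The paper instead groups the four uncorrelated contributions by \emph{driving signal}: $J_U$ collects the tracking error and the control energy driven by $r$ (the maps $I-NQ$ and $FMQ$, affine in $Q$ alone, weighted by $U=\mathrm{diag}[\sigma_1,\cdots,\sigma_l]$ and the integrator $1/s$), while $J_V$ collects the output and the control energy driven by $n$ (the maps $PM(\tilde{X}-R\tilde{N})H$ and $M(\tilde{X}-R\tilde{N})H$, affine in $R$ alone, weighted by $V=\mathrm{diag}[\gamma_1,\cdots,\gamma_l]$). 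Then $J=J_U(Q)+J_V(R)$, so $J^*=\inf_Q J_U+\inf_R J_V$ holds exactly and trivially; no cross-term or joint-minimizer argument is needed. This structural decoupling---precisely what the two-parameter compensator buys---is the missing idea in your proposal.

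The same confusion shows in where you locate the terms of the final formula, and it would derail the computations. The sum $\sum_{i}\frac{\mathrm{Re}(z_i)}{|z_i|^2}\sum_{j}\sigma_j^2\cos^2\angle(\eta_i,e_j)$ and the Bode-type integral of $\log|f(j\omega)|$ both come from the $r$-driven part $J_U$: note the reference weights $\sigma_j^2$, the normalization $f(0)=\sum_i\sigma_i^2$, and the fact that $f$ is built from $N_m$ and the outer factor $\Theta_o$ of the factorization (\ref{sqrtthetao}), which mixes $N_m$ with $C_{FM}$ exactly because tracking error and control energy are traded off inside $J_U$. They are not, as you claim, part of the noise analysis. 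Conversely, the double sum over pairs $(z_i,z_j)$ involving $D_i^l$, $D_i^r$, $O$, $V$ is the optimal value $J_V^*$ of the noise-driven part: it carries the noise weights $\gamma_i$ through $V$ and the coloring filter $H$ through $O(z_i)=\Delta_0(z_i)M^{-1}(z_i)H(z_i)$, and no $1/s$ appears in it; it is not the tracking term. Until you regroup by driving signal and reassign the pieces accordingly, the projections and residue evaluations you outline act on the wrong operators and cannot be carried through to the stated expression.
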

\begin{proof}
From (\ref{eq7}), we have
\begin{multline}\label{eqq4}
J: = (1-\epsilon){\rm{tr}}\{{R_{\hat{e}_r}(0)}+{R_{y_n}(0)}\}\\
+\epsilon{\rm{tr}}\{{R_{u_{cr}}(0)}+{R_{u_{cn}}(0)}\},
\end{multline}
where $R_{\hat{e}_r}(t), R_{y_n}(t), R_{u_{cr}}(t)$ and $R_{u_{cn}}(t)$, are the autocorrelation functions of the random processes $\hat{e}_r(t), y_n(t), u_{cr}(t)$ and $u_{cn}(t)$, respectively. Denote the spectral densities of $r$ and $n$ as $S_r(j\omega)$ and $S_n(j\omega)$ respectively.
\par
Then we have
\begin{align*}
 J=&(1-\epsilon)\frac{1}{2\pi}\bigg[\int_{-\infty}^{+\infty}{\rm{tr}(T_{\hat{e}_r}S_r(j\omega)T_{\hat{e}_r}^T)}{\rm{d}}\omega\\
 &\qquad\qquad\qquad\qquad\quad+\int_{-\infty}^{+\infty}{\rm{tr}(T_{y_n}S_n(j\omega)T_{y_n}^T)}{\rm{d}}\omega\bigg]\\
&+\epsilon\frac{1}{2\pi}\bigg[\int_{-\infty}^{+\infty}{\rm{tr}(T_{u_{cr}}S_r(j\omega)T_{u_{cr}}^T)}{\rm{d}}\omega\\
 &\qquad\qquad\qquad\qquad\quad+\int_{-\infty}^{+\infty}{\rm{tr}(T_{u_{cn}}S_n(j\omega)T_{u_{cn}}^T)}{\rm{d}}\omega\bigg]\\
=&(1-\epsilon)\Bigg(\bigg\|T_{\hat{e}_r}U\frac{1}{s}\bigg\|_2^2+\bigg\|T_{y_n}V\bigg\|_2^2\Bigg)\\
&\qquad\qquad\qquad\qquad\quad+\epsilon\Bigg(\bigg\|T_{u_{cr}}U\frac{1}{s}\bigg\|_2^2+\bigg\|T_{u_{cn}}V\bigg\|_2^2\Bigg)
\end{align*}
\begin{align*}
=&(1-\epsilon)\Bigg(\bigg\|\left[I-(I-PFK_2)^{-1}PFK_1\right]U\frac{1}{s}\bigg\|_2^2\\&\qquad\qquad\qquad\qquad\qquad\quad+\bigg\|(I-PFK_2)^{-1}PHV\bigg\|_2^2\Bigg)\\
&+\epsilon\Bigg(\bigg\|(I-FK_2P)^{-1}FK_1U\frac{1}{s}\bigg\|_2^2\\&\quad\qquad\qquad\qquad\qquad\qquad+\bigg\|(I-FK_2P)^{-1}HV\bigg\|_2^2\Bigg)\\
=&\left\|
\begin{bmatrix}
    \sqrt{1-\epsilon}(I-NQ) \\
    \sqrt{\epsilon}FMQ \\
  \end{bmatrix}U\frac{1}{s}
\right\|_2^2\\
&\qquad\qquad\qquad\qquad+\left\|
\begin{bmatrix}
    \sqrt{1-\epsilon}PM({\tilde{X}}-R{\tilde{N}})\\
    \sqrt{\epsilon}M({\tilde X}-R{\tilde N}) \\
  \end{bmatrix}HV
\right\|_2^2\\
=&J_U+J_V,
\end{align*}
where
$$U=diag[\sigma_1,\sigma_2,\cdots,\sigma_l],
    ~V=diag[\gamma_1,\gamma_2,\cdots,\gamma_l].
$$
Evidently, we have
$$J^*=\inf_{K\in \mathcal{K}}J=\inf_{Q\in \mathbb{R}\mathcal{H}_\infty}J_U+\inf_{R\in \mathbb{R}\mathcal{H}_\infty}J_V=J_U^*+J_V^*.$$
Firstly, for $J_U$, using the allpass factorization (\ref{eq2}), we have
\begin{align*}
J_U^*=&\inf_{Q\in \mathbb{R}\mathcal{H}_\infty}\left\|
\begin{bmatrix}
    \sqrt{1-\epsilon}(I-NQ) \\
    \sqrt{\epsilon}FMQ \\
  \end{bmatrix}U\frac{1}{s}
\right\|_2^2\\
=&\inf_{Q\in \mathbb{R}\mathcal{H}_\infty}\left\|
\begin{bmatrix}
\sqrt{1-\epsilon}(L^{-1}-I)\\0
\end{bmatrix}U\frac{1}{s}
\right\|_2^2\\
&\qquad\qquad+
\left\|
\begin{bmatrix}
\sqrt{1-\epsilon}I\\0
\end{bmatrix}U\frac{1}{s}
+\begin{bmatrix}
    -\sqrt{1-\epsilon}N_m \\
    \sqrt{\epsilon}C_{FM} \\
  \end{bmatrix}QU\frac{1}{s}
\right\|_2^2\\
=&2(1-\epsilon)\sum_{i=1}^{n_z+n_f}\frac{{\mathrm{Re}}(z_i)}{|z_i|^2}\|\eta_i^H U\|_F^2+\inf_{Q\in \mathbb{R}\mathcal{H}_\infty}J_{U_1}\\
=&2(1-\epsilon)\sum_{i=1}^{n_z+n_f}\frac{{\mathrm{Re}}(z_i)}{|z_i|^2}\sum_{j=1}^m\sigma_j^2\cos^2\angle(\eta_i,e_j)+J_{U_1}^*
,\end{align*}
where $C_{FM}$ is the minimum phase part of $FM$, $\eta_i$ is the direction vector associated
with the zero of $PF$ , $e_j$ is unitary a column vector, whose j-th element is 1 and the remaining elements 0, and
\begin{align*}
J_{U_1}^*=&\inf_{Q\in \mathbb{R}\mathcal{H}_\infty}\left\|\left\{
\begin{bmatrix}
\sqrt{1-\epsilon}I\\0
\end{bmatrix}
+\begin{bmatrix}
    -\sqrt{1-\epsilon}N_m \\
    \sqrt{\epsilon}C_{FM} \\
  \end{bmatrix}Q\right\}U\frac{1}{s}
\right\|_2^2\\
=&\inf_{Q\in \mathbb{R}\mathcal{H}_\infty}\left\|\left\{
\begin{bmatrix}
-\sqrt{1-\epsilon}I\\0
\end{bmatrix}
+\begin{bmatrix}
    \sqrt{1-\epsilon}N_m \\
    \sqrt{\epsilon}C_{FM} \\
  \end{bmatrix}Q\right\}U\frac{1}{s}
\right\|_2^2.
\end{align*}
Furthermore, we perform an inner-outer factorization given in \cite{Francis1987a}, such that
\begin{equation}\label{sqrtthetao}
\begin{bmatrix}
    \sqrt{1-\epsilon}N_m\\
    \sqrt{\epsilon}C_{FM}
\end{bmatrix}=\Theta_i\Theta_o,
\end{equation}
where $\Theta_i\in{\rm{\mathbb{R}\mathcal{H}_\infty}}$ is an inner matrix function, and $\Theta_o\in{\rm{\mathbb{R}\mathcal{H}_\infty}}$ is an outer. According to the definition of an inner matrix function, we have
\begin{equation}\label{thetaI}
\Theta_i^T(-j\omega)\Theta_i(j\omega)=I.
\end{equation}
From (\ref{sqrtthetao}), the following equation can be obtained
\begin{multline*}
\Theta_o^T(-j\omega)\Theta_o(j\omega)=(1-\epsilon)N_m^T(-j\omega)N_m(j\omega)\\
+\epsilon(C_{FM}^T(-j\omega)C_{FM}(j\omega)).
\end{multline*}
From (\ref{thetaI}), one can define the following matrix function with its module equal to 1,
\begin{align*}
\Psi(j\omega)=\begin{bmatrix}
\Theta_i^T(-j\omega)\\1-\Theta_i(j\omega)\Theta_i^T(-j\omega)
\end{bmatrix}.
\end{align*}
So according to the property of the matrix norm, ${\rm{J_{U_1}}}$ becomes
\begin{align*}
J_{U_1}^*=&\inf_{Q\in \mathbb{R}\mathcal{H}_\infty}\left\|\Psi\left\{
\begin{bmatrix}
-\sqrt{1-\epsilon}I\\0
\end{bmatrix}
+\begin{bmatrix}
    \sqrt{1-\epsilon}N_m \\
    \sqrt{\epsilon}C_{FM} \\
  \end{bmatrix}Q\right\}U\frac{1}{s}
\right\|_2^2\\
=&\inf_{Q\in \mathbb{R}\mathcal{H}_\infty}\left\|(A_1+\Theta_oQ)U\frac{1}{s}\right\|_2^2+\left\|A_2U\frac{1}{s}\right\|_2^2,
\end{align*}
where
\begin{align*}
A_1=&\Theta_i^H\begin{bmatrix}-\sqrt{1-\epsilon}I\\0\end{bmatrix}
=-(1-\epsilon)\,\Theta_o^{-H}N_m^H,\\
A_2=&(1-\Theta_i\Theta_i^H)\begin{bmatrix}-\sqrt{1-\varepsilon}I\\0\end{bmatrix}
=\begin{bmatrix}-\sqrt{1-\epsilon}I\\0\end{bmatrix}-\Theta_iA_1\\
=&\begin{bmatrix}
\sqrt{1-\varepsilon}\left(-I+(1-\epsilon)N_m\Theta_o^{-1}\Theta_o^{-H}N_m^H\right)\\
(1-\varepsilon)\sqrt{\epsilon}C_{FM}\Theta_o^{-1}\Theta_o^{-H}N_m^H
\end{bmatrix}.
\end{align*}
We then obtain
\begin{multline*}
{J}_{U_1}^*=(1-\epsilon)^2\bigg\|\big(\Theta_o^{-H}N_m^H\\
-\Theta_o^{-H}(0)N_m^H(0)\big)U\frac{1}{s}\bigg\|+\left\|A_2U\frac{1}{s}\right\|.
\end{multline*}
Similar to \cite{chen2003best}, we may invoke lemma \ref{le1},  and obtains
${J}_{U_1}^*=-(1-\epsilon)f^\prime(0).$
In light of lemma \ref{le2}, one also obtains
\[{J}_{U_1}^*=(1-\epsilon)(\sum_{i=1}^l{\sigma_i^2})\left[2\sum_{i=1}^{N_{s}}\frac{\mathrm{Re}s_i}{|s_i|}
-\frac{1}{\pi}\int_{-\infty}^{+\infty}\frac{\log{|f(j\omega)|}}{\omega^2}\mathrm{d}\omega\right].\]
Thus, we have
\begin{align*}
{J}_{U}^*
=&2(1-\epsilon)\Bigg[\sum_{i=1}^{n_z+n_f}\frac{{\mathrm{Re}}(z_i)}{|z_i|^2}\sum_{j=1}^l\sigma_j^2\cos^2\angle(\eta_i,e_j)\\
&+(\sum_{i=1}^l{\sigma_i^2})\left(\sum_{i=1}^{N_{s}}\frac{\mathrm{Re}s_{i}}{|s_{i}|^2}
-\frac{1}{\pi}\int_{0}^{+\infty}\frac{\log{|f(j\omega)|}}{\omega^2}\mathrm{d}\omega\right)\Bigg].
\end{align*}
Secondly, for $J_V$, we have
\begin{align*}
J_V=&\left\|
\begin{bmatrix}
    \sqrt{1-\epsilon}PM({\tilde{X}}-R{\tilde{N}})\\
    \sqrt{\epsilon}M({\tilde X}-R{\tilde N}) \\
  \end{bmatrix}HV
\right\|_2^2\\
=&\left\|
\begin{bmatrix}
    \sqrt{1-\epsilon}N_{om}\\
    \sqrt{\epsilon}M_m  \end{bmatrix}
    ({\tilde{X}}-R{\tilde{N}})HV
\right\|_2^2,
\end{align*}
where $N_{om}$ is the minimum phase part of $N_o$.

Similar to the equation (\ref{sqrtthetao}), we perform an inner-outer factorization such that
$$
\begin{bmatrix}
    \sqrt{1-\epsilon}N_{om}\\
    \sqrt{\epsilon}M_m
\end{bmatrix}=\Delta_i\Delta_0.
$$
In addition, similar to \cite{li2009optimal1}, we factorize $\tilde{N}HV=CD$, where $C$ is the minimum phase part, $D\in\mathbb{R}\mathcal{H}_\infty$ is an allpass factor which can be formed as
\begin{align}
\nonumber {D}(s):=&\prod\limits_{i=1}^{n_z+n_f}{{D_i}}(s),~~\\
\label{11}  {D_i}(s):= &{[\omega_i~~W_i]
\left[
  \begin{array}{cc}
    \frac{s-z_i}{s+\bar{z}_i} & 0 \\
    0 & I
  \end{array}
\right]
  \left[
    \begin{array}{c}
      \omega_i^H \\
      W_i^H
    \end{array}
  \right]}.
\end{align}
Hence, in light of Lemma \ref{le3}, we have
\begin{align*}
J_V=&\left\|
\Delta_0({\tilde{X}HV}-RCD)
\right\|_2^2
=\left\|
\Delta_0{\tilde{X}HV}D^{-1}-\Delta_0RC
\right\|_2^2\\
=&\bigg\|
\sum_{i=1}^{n_z+n_f}O(z_i)V{D}_{i}^l(z_i)[D_{i}^{-1}\\
&\qquad\qquad-D_{i}^{-1}(\infty)]{D}^r_i(z_i)+R_1-\Delta_0RC\bigg\|_2^2,
\end{align*}
where $R_1\in\mathbb{R}\mathcal{H}_\infty$, and
\begin{align*}
O(z_i)=&\Delta_0(z_i)\tilde{X}(z_i)H(z_i)
       =\Delta_0(z_i)M^{-1}(z_i)H(z_i),\\
{D}^l_i(z_i)=&D_1^{-1}(z_i)D_2^{-1}(z_i)\cdots{D_{i-1}^{-1}(z_i)},\\
{D}^r_i(z_i)=&D_{i+1}^{-1}(z_i)D_{i+2}^{-1}(z_i)\cdots{D_{n_z+n_f}^{-1}(z_i)}.
\end{align*}
Since $\Delta_0$ is right invertible and $C$ left invertible, we have
\begin{align*}
J_V^*=&\inf_{R\in \mathbb{R}\mathcal{H}_\infty}\Bigg\|\sum_{i=1}^{n_z+n_f}O(z_i)V{D}^l_{i}(z_i)[D_{i}^{-1}\\ &\qquad\qquad~-D_{i}^{-1}(\infty)]D^r_i(z_i)\Bigg\|_2^2+\left\|R_1-\Delta_0RC\right\|_2^2\\
   =&\left\|\sum_{i=1}^{n_z+n_f}O(z_i)V{D}^l_i(z_i)
\frac{2\mathrm{Re}(z_i)}{s-{z}_i}{\omega}_i
{\omega}_i^HD^r_i(z_i)\right\|_2^2\\
   =&\sum_{i,j=1}^{n_z+n_f}\frac{4Re(z_i)Re(z_j)}{\bar{z}_i+z_j}
   {\omega}_j^HD^r_i(z_j){D}_i^{rH}(z_i){\omega}_i\\
   &\qquad\qquad{\omega}_i^H{D_i^l}^H(z_i)V^HO^H(z_i)O(z_j)VD_j^l(z_j){\omega}_j.
\end{align*}
The proof is thus completed.
\end{proof}
\begin{remark}
When there is no network channel, because the Brownian motion random process is different from the step signal vector with deterministic direction, this result can not be degraded to the results in literature \cite{chen2003best}.
\end{remark}
\vspace{8pt}
\begin{corollary}
If the system $P(s)$ are SISO in Theorem \ref{th1}, then the optimal tracking performance can be written as
\begin{align*}
J^*=&2(1-\epsilon)\sigma^2\bigg[\sum_{i=1}^{n_z+n_f}\frac{{\mathrm{Re}}(z_i)}{|z_i|^2}
+\sum_{i=1}^{N_s}\frac{\mathrm{Re}s_{i}}{|s_{i}|^2}\\
&-\frac{1}{\pi}\int_{0}^{+\infty}\frac{\log{|f(j\omega)|}}{\omega^2}\mathrm{d}\omega\bigg]
+\gamma^2\sum_{i,j=1}^{n_z+n_f}\frac{\bar{\tilde{r}}_i\tilde{r}_j}{\bar{z}_i+z_j}.
\end{align*}
where
$$\tilde{r}_i={{\rm{Res}}_{s=z_i}}\Delta_o(z_i)M^{-1}(z_i)H(z_i)\hat{L}_i^{-1}$$
\end{corollary}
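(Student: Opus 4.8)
The plan is to specialize the MIMO identity of Theorem \ref{th1} to the scalar case $l=1$, tracking how each matrix object degenerates, and then to evaluate the resulting $H_2^\perp$ inner products in closed form. The computation splits along the same $J^*=J_U^*+J_V^*$ decomposition used in the proof of Theorem \ref{th1}.

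First I would dispose of the $J_U^*$ part, which is immediate. In the SISO setting $U=\sigma$ is a scalar, there is a single channel so the output direction $\eta_i$ is a unit-modulus scalar, and consequently $\cos^2\angle(\eta_i,e_1)=1$ with the $j$-sum collapsing to a single term. Hence $\sum_{j=1}^l\sigma_j^2\cos^2\angle(\eta_i,e_j)=\sigma^2$ and $\sum_{i=1}^l\sigma_i^2=\sigma^2$, so the first two lines of the Theorem \ref{th1} formula factor as $2(1-\epsilon)\sigma^2$ times the bracket containing the zero sum, the $s_i$-sum, and the log integral. This reproduces the first line of the corollary verbatim and needs no new computation.

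The substantive work is the $J_V^*$ term. I would start from the optimized expression $J_V^*=\big\|\sum_i O(z_i)VD_i^l(z_i)\tfrac{2\mathrm{Re}(z_i)}{s-z_i}\omega_i\omega_i^H D_i^r(z_i)\big\|_2^2$. In the scalar case every $\omega_i$ is a unit-modulus scalar, so $\omega_i\omega_i^H=1$, while $V=\gamma$, $O(z_i)$, $D_i^l(z_i)$, $D_i^r(z_i)$ are scalars that commute freely; writing $c_i:=2\mathrm{Re}(z_i)\,O(z_i)D_i^l(z_i)D_i^r(z_i)$ turns the expression into $\gamma^2\big\|\sum_i c_i/(s-z_i)\big\|_2^2$. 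The \emph{key identification} I must then establish is $c_i=\tilde r_i$: since in SISO $\tilde N=N$ and $HV$ contributes no extra NMP zeros, the allpass factor $D$ of $\tilde N HV=CD$ coincides with the inner factor $\hat L$ of $N$, so $D_k=\hat L_k$ and $D_i^l(z_i)D_i^r(z_i)=\prod_{k\neq i}\hat L_k^{-1}(z_i)$. Because $\hat L_k^{-1}(z_i)=\frac{z_i+\bar z_k}{z_i-z_k}$ is finite for $k\neq i$ while $\hat L_i^{-1}$ has a simple pole at $z_i$ with $\mathrm{Res}_{s=z_i}\hat L_i^{-1}=z_i+\bar z_i=2\mathrm{Re}(z_i)$, the single residue $\mathrm{Res}_{s=z_i}\Delta_o M^{-1}H\hat L^{-1}=\Delta_o(z_i)M^{-1}(z_i)H(z_i)\prod_{k\neq i}\hat L_k^{-1}(z_i)\,2\mathrm{Re}(z_i)$ is exactly $c_i$, using $O(z_i)=\Delta_o(z_i)M^{-1}(z_i)H(z_i)$ from the proof of Theorem \ref{th1}. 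This is precisely the quantity $\tilde r_i$ appearing in the corollary.

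Finally I would evaluate the norm. The functions $1/(s-z_i)$ lie in $H_2^\perp$ because $z_i\in\mathbb{C}_+$, and a standard contour computation gives $\langle 1/(s-z_i),1/(s-z_j)\rangle=1/(\bar z_i+z_j)$. Expanding $\big\|\sum_i c_i/(s-z_i)\big\|_2^2=\sum_{i,j}\bar c_i c_j/(\bar z_i+z_j)$ and substituting $c_i=\tilde r_i$ yields $J_V^*=\gamma^2\sum_{i,j=1}^{n_z+n_f}\bar{\tilde r}_i\tilde r_j/(\bar z_i+z_j)$, which added to $J_U^*$ gives the claimed identity. I expect the only delicate point to be the bookkeeping of the previous paragraph, namely verifying that the allpass factor $D$ from the $\tilde N HV$ factorization genuinely reduces to $\hat L$ in the scalar case and that the product of off-diagonal Blaschke evaluations $\prod_{k\neq i}\hat L_k^{-1}(z_i)$ is faithfully reproduced by a single residue of $\hat L^{-1}$; by comparison the scalar reduction of $J_U^*$ and the $H_2^\perp$ inner-product evaluation are routine.
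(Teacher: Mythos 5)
Your proposal is correct and takes exactly the route the paper intends: the paper states this corollary with no proof of its own, as a direct SISO specialization of Theorem \ref{th1}, and your scalar reduction of $J_U^*$ and $J_V^*$ (the collapse of $\cos^2\angle(\eta_i,e_j)$ to $1$, the identification $D_i=\hat L_i=\frac{s-z_i}{s+\bar z_i}$ in the scalar case, and the inner products $\langle 1/(s-z_i),1/(s-z_j)\rangle=1/(\bar z_i+z_j)$) matches the computations the paper itself carries out in the proofs of Theorem \ref{th1} and Theorem 2. In particular, your reading of $\tilde r_i$ as the residue at $z_i$ of the full product $\Delta_o M^{-1}H\hat L^{-1}$, so that the off-diagonal factors $\prod_{k\neq i}\hat L_k^{-1}(z_i)$ and the factor $2\mathrm{Re}(z_i)$ both appear, is the interpretation consistent with the paper's own usage in Theorem 2; taking $\hat L_i^{-1}$ alone, as the corollary's notation literally suggests, would not reproduce the formula of Theorem \ref{th1}.
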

%\begin{remark}
%From the expression in Theorem 1,
%\end{remark}
{\begin{figure}[h]
\centering
  \includegraphics[width=8.5cm]{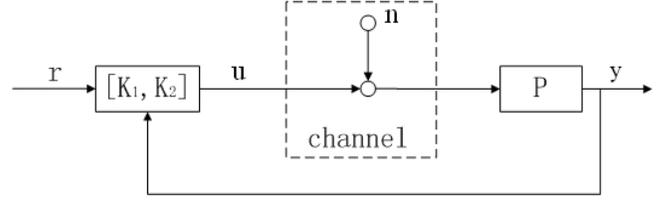}
  \caption{Feedback control over AWGN channels}\label{Fig.2}
\end{figure}}
\par
\begin{corollary}
Consider the simple channel case with $F=I$ and $H=I$. Under the assumptions in Theorem 1, define
$f(s):=\mathrm{tr}\{(1-\epsilon)U^TN_m(s)\Lambda_o^{-1}(s)\Lambda_o^{-T}(0)N_m^T(0)U\},$
and factorize $f(s):=\big(\prod_{i=1}^{N_s}{\bar{s}_i(s_i-s)}/{(s_i(\bar{s}_i+s))}\big)f_m(s),$
where $s_i\in\mathbb{C}_+$ are the nonminimum phase zeros of $f(s)$ and $f_m(s)$ is minimum phase. It is noted that, $f(s), f_m(s)\in\mathbb{R}\mathcal{H}\infty$, $ f(0)=f_m(0)=\sum_{i=1}^l{\sigma_i^2}.$
Then, with the two-parameter controller given in Fig.\ref{Fig.2},
\begin{align*}
J^*=&2(1-\epsilon)\Bigg[\sum_{i=1}^{n_z}\frac{{\mathrm{Re}}(z_i)}{|z_i|^2}
\sum_{j=1}^m\sigma_j^2\cos^2\angle(\eta_i,e_j)\\
&+(\sum_{i=1}^l{\sigma_i^2})\left(\sum_{i=1}^{N_{s}}\frac{\mathrm{Re}s_{i}}{|s_{i}|^2}-\frac{1}{\pi}\int_{0}^{+\infty}\frac{\log{|f(j\omega)|}}{\omega^2}\mathrm{d}\omega\right)\Bigg]\\   &+\sum_{i,j=1}^{n_z}\frac{4Re(z_i)Re(z_j)}{\bar{z}_i+z_j}\hat{\omega}_j^H\hat{D}^r_i(z_j)\hat{D}_i^{rH}(z_i)\hat{\omega}_i\\
&\times \hat{O}^H(z_i)\hat{O}(z_j)V\hat{D}_j^l(z_j)\hat{\omega}_j
\hat{\omega}_i^H{\hat{D}_i^{l~H}}(z_i)V^H.
\end{align*}
\end{corollary}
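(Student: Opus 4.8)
The plan is to follow the argument of Theorem~\ref{th1} essentially verbatim, specializing every factorization to the degenerate channel $F=I$, $H=I$, and tracking how the two nonminimum phase contributions collapse. First I would observe that with $F=I$ the plant factorization $PF=NM^{-1}$ reduces to $P=NM^{-1}$, so that $N=N_0$ and the set of nonminimum phase zeros entering the analysis shrinks from the $n_z+n_f$ zeros of $PF$ to the $n_z$ zeros $z_i$ of $P$ alone, since the identity channel contributes none. In the same way the minimum phase factor $C_{FM}$ of $FM$ becomes $M_m$, the minimum phase part of $M$, and the inner-outer factorization (\ref{sqrtthetao}) is replaced by $[\sqrt{1-\epsilon}N_m;\ \sqrt{\epsilon}M_m]=\Lambda_i\Lambda_o$, whose outer factor $\Lambda_o$ is exactly the one appearing in the definition of $f(s)$ in the corollary.

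Next I would reuse the orthogonal decomposition $J=J_U+J_V$ together with the splitting $J^*=J_U^*+J_V^*$. For $J_U^*$ the steps are identical to the theorem: applying the allpass factorization (\ref{eq2}) isolates the term $2(1-\epsilon)\sum_{i=1}^{n_z}(\mathrm{Re}\,z_i/|z_i|^2)\|\eta_i^HU\|_F^2$, which reproduces the first sum after using $\|\eta_i^HU\|_F^2=\sum_j\sigma_j^2\cos^2\angle(\eta_i,e_j)$. The residual minimization over $Q$ is then handled exactly as before by forming $\Psi$, extracting $A_1$ and $A_2$, and invoking Lemma~\ref{le1} and Lemma~\ref{le2} to convert $-(1-\epsilon)f'(0)$ into the logarithmic-integral plus $s_i$-sum expression. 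The only bookkeeping change is that the zero index now runs to $n_z$ and $f$ is built from $\Lambda_o$ rather than $\Theta_o$.

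For $J_V^*$ I would set $H=I$, so the factorization $\tilde{N}HV=CD$ becomes $\tilde{N}V=CD$ with the allpass factor $D=\prod_{i=1}^{n_z}\hat{D}_i$ built from the hatted direction data. Applying Lemma~\ref{le3} to peel off $D^{-1}$ at each $z_i$, the infimum over $R$ annihilates the stable remainder $R_1$, and what survives is the single-pole term $\sum_i \hat{O}(z_i)V\hat{D}_i^l(z_i)\frac{2\mathrm{Re}(z_i)}{s-z_i}\hat{\omega}_i\hat{\omega}_i^H\hat{D}_i^r(z_i)$, whose squared $H_2$ norm is evaluated by the standard residue pairing to produce the coefficient $4\mathrm{Re}(z_i)\mathrm{Re}(z_j)/(\bar z_i+z_j)$. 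This yields the stated double sum with $\hat{O}(z_i)=\Lambda_o(z_i)M^{-1}(z_i)$, the factor $H(z_i)$ having dropped out.

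The argument is almost entirely mechanical, so the only real obstacle is keeping the specialized factorizations consistent: one must check that deleting $F$ genuinely removes the $n_f$ zeros while leaving the output directions $\eta_i$ and the allpass blocks $\hat{L}_i$, $\hat{D}_i$ expressed in the correct one-at-a-time unitary coordinates, and that the residue evaluation of $J_V^*$ reproduces the precise ordering of the hatted $\hat{D}_i^l$, $\hat{D}_i^r$ and the $V$, $V^H$ factors in the claimed formula. Everything else is inherited unchanged from Theorem~\ref{th1}.
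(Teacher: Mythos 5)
Your proposal is correct and follows essentially the same route as the paper's own proof: specialize the decomposition $J=J_U+J_V$ from Theorem~1 to $F=I$, $H=I$, replace the inner--outer factorization $\Theta_i\Theta_o$ by $\bigl[\sqrt{1-\epsilon}N_m^H~\sqrt{\epsilon}M_m^H\bigr]^H=\Lambda_i\Lambda_o$, and factorize $\tilde{N}V=\hat{C}\hat{D}$ so that Lemma~\ref{le3} and the residue computation give $J_V^*$ with $\hat{O}(z_i)=\Lambda_o(z_i)M^{-1}(z_i)$. The bookkeeping points you flag (zeros reduced from $n_z+n_f$ to $n_z$, $C_{FM}\to M_m$, $H$ dropping out of $\hat{O}$) are exactly the specializations the paper makes.
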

\begin{proof}
Similar to the proof of Theorem 1, we have the performance index
\begin{align*}
J: =&(1-\epsilon){\rm{tr}}\{{R_{\hat{e}_r}(0)}+{R_{y_n}(0)}\}
+\epsilon{\rm{tr}}\{{R_{u_{cr}}(0)}+{R_{u_{cn}}(0)}\}\\
   =&(1-\epsilon)(\|T_{\hat{e}_{r}}U\frac{1}{s}\|+\|T_{y_n}V\|)\\
   &\qquad\qquad\qquad\qquad\qquad\quad+\epsilon(\|T_{u_{cr}}U\frac{1}{s}\|+\|T_{u_{cn}}V\|)\\
   =&\left\|
\begin{bmatrix}
    \sqrt{1-\epsilon}(I-NQ) \\
    \sqrt{\epsilon}MQ \\
    \end{bmatrix}U\frac{1}{s}
\right\|_2^2\\
&\qquad\qquad\qquad\qquad\qquad+\left\|
\begin{bmatrix}
    \sqrt{1-\epsilon}N({\tilde{X}}-R{\tilde{N}})\\
    \sqrt{\epsilon}M({\tilde{X}}-R{\tilde{N}})\\
    \end{bmatrix}V
\right\|_2^2\\
   =&J_U+J_V.
\end{align*}
Let $\big[\sqrt{1-\epsilon}N_m^H~\sqrt{\epsilon}M_m^H\big]^H=\Lambda_i\Lambda_o,$
we can obtain
\begin{multline*}
{J}_{U}^*=2(1-\epsilon)\Bigg[\sum_{i=1}^{n_z}\frac{{\mathrm{Re}}(z_i)}{|z_i|^2}
\sum_{j=1}^n\sigma_j^2\cos^2\angle(\eta_i,e_j)\\
\qquad\quad+(\sum_{i=1}^l{\sigma_i^2})\left(\sum_{i=1}^{N_{s}}\frac{\mathrm{Re}s_{i}}{|s_{i}|^2}
-\frac{1}{\pi}\int_{0}^{+\infty}\frac{\log{|f(j\omega)|}}{\omega^2}\mathrm{d}\omega\right)\Bigg].
\end{multline*}
For $J_V$, we have
\begin{align*}
J_V=&\left\|
\begin{bmatrix}
    \sqrt{1-\epsilon}N({\tilde{X}}-R{\tilde{N}})\\
    \sqrt{\epsilon}M({\tilde{X}}-R{\tilde{N}})\\
\end{bmatrix}V\right\|_2^2\\
=&\left\|
\begin{bmatrix}
    \sqrt{1-\epsilon}N_m\\
    \sqrt{\epsilon}M_m
\end{bmatrix}({\tilde{X}}-R{\tilde{N}})V\right\|_2^2.
\end{align*}
In addition, we factorize $\tilde{N}V=\hat{C}\hat{D}$, where $\hat{C}$ is the minimum phase part, $\hat{D}\in\mathbb{R}\mathcal{H}_\infty$ is an allpass factor which can be formed as
\begin{align*}
\nonumber {\hat{D}}(s):=&\prod\limits_{i=1}^{n_z}{{\hat{D}_{i}}}(s),\\
  {\hat{D}_{i}}(s):=& {[\hat{\omega}_i~~\hat{W}_i]
\left[
  \begin{array}{cc}
    \frac{s-z_i}{s+\bar{z}_i} & 0 \\
    0 & I
  \end{array}
\right]
  \left[
    \begin{array}{c}
      \hat{\omega}_i^H \\ \hat{W}_i^H
    \end{array}
  \right]}.
\end{align*}
Similar to the proof of Theorem \ref{th1}, we can obtain
\begin{multline*}
J_V^*=\sum_{i,j=1}^{n_z}\frac{4Re(z_i)Re(z_j)}{\bar{z}_i+z_j}
   \hat{\omega}_j^H\hat{D}^r_i(z_j)\hat{D}_i^{rH}(z_i)\hat{\omega}_i\\
   \hat{\omega}_i^H{\hat{D}_i^{l~H}}(z_i)V^H
   \hat{O}^H(z_i)\hat{O}(z_j)V\hat{D}_j^l(z_j)\hat{\omega}_j,
\end{multline*}
where
\begin{align*}
\hat{O}(z_i)=&\Lambda_0(z_i)M^{-1}(z_i),\\
\hat{D}^l_i(z_i)=&\hat{D}_1^{-1}(z_i)\hat{D}_2^{-1}(z_i)\cdots{\hat{D}_{i-1}^{-1}(z_i)},~~\\
\hat{D}^r_i(z_i)=&\hat{D}_{i+1}^{-1}(z_i)\hat{D}_{i+2}^{-1}(z_i)\cdots{\hat{D}_{n_z}^{-1}(z_i)}.
\end{align*}
The proof is thus completed.\end{proof}
\par
%\vspace{8pt}
If there is no channel noise in the configuration of the feedback control system depicted in Fig.2, then the following result can be immediately obtained.
\begin{corollary}\label{co2}
Consider the case of Fig.\ref{Fig.2}, and suppose that the channel is noise-free. Under the same assumptions described in Theorem 1, we have
\begin{align*}
{J}^*=&{J}_{U}^*=2(1-\epsilon)\bigg[\sum_{i=1}^{n_z}\frac{{\mathrm{Re}}(z_i)}{|z_i|^2}
\sum_{j=1}^l\sigma_j^2\cos^2\angle(\eta_i,e_j)\\
&+(\sum_{j=1}^l\sigma_j^2)\left(\sum_{i=1}^{N_{s}}\frac{\mathrm{Re}s_{fi}}{|s_{fi}|^2}
-\frac{1}{\pi}\int_{0}^{+\infty}\frac{\log{|f(j\omega)|}}{\omega^2}\mathrm{d}\omega\right)\bigg].
\end{align*}
\end{corollary}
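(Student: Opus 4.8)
The plan is to recognize Corollary \ref{co2} as the noise-free specialization of the $F=I$, $H=I$ case already treated in the preceding corollary, so that almost all of the work reduces to observing that the noise contribution drops out. First I would return to the decomposition $J=J_U+J_V$ established in the proof of the preceding corollary, in which $J_U$ collects the response of the tracking error and control input to the reference $r$, and $J_V$ collects their response to the channel noise $n$. The noise enters $J_V$ only through the gain matrix $V=\mathrm{diag}[\gamma_1,\ldots,\gamma_l]$, so a noise-free channel corresponds to $\gamma_i=0$ for every $i$, i.e.\ $V=0$, whence the expression for $J_V$ is identically zero irrespective of the free parameter $R$. Therefore $\inf_R J_V=0$ trivially, and $J^*=\inf_Q J_U+\inf_R J_V=J_U^*$, which already establishes the first equality $J^*=J_U^*$ in the statement.

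Next I would evaluate $J_U^*$, which under $F=I$ and $H=I$ is verbatim the $J_U$ computed in the preceding corollary. The steps are: perform the inner-outer factorization $[\sqrt{1-\epsilon}N_m^H~\sqrt{\epsilon}M_m^H]^H=\Lambda_i\Lambda_o$; split off the allpass factor $L$ using (\ref{eq2}) to extract the contribution $2(1-\epsilon)\sum_{i=1}^{n_z}\frac{\mathrm{Re}(z_i)}{|z_i|^2}\|\eta_i^HU\|_F^2=2(1-\epsilon)\sum_{i=1}^{n_z}\frac{\mathrm{Re}(z_i)}{|z_i|^2}\sum_{j=1}^l\sigma_j^2\cos^2\angle(\eta_i,e_j)$; and then apply Lemma \ref{le1} to rewrite the residual term as $-(1-\epsilon)f'(0)$, followed by Lemma \ref{le2} to convert it into the integral form $(\sum_{j=1}^l\sigma_j^2)(\sum_{i=1}^{N_s}\frac{\mathrm{Re}s_{fi}}{|s_{fi}|^2}-\frac{1}{\pi}\int_0^{\infty}\frac{\log|f(j\omega)|}{\omega^2}\mathrm{d}\omega)$, where $f$ is the trace function built from $N_m$ and $\Lambda_o$ exactly as defined in the preceding corollary. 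Summing the two pieces yields precisely the claimed expression for $J^*$.

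There is no genuinely hard estimate here, since the corollary is a clean specialization of machinery already in place; accordingly the step I would treat most carefully is the vanishing of $J_V$. The point worth verifying explicitly is that, because $V=0$ makes the entire noise index identically zero rather than merely minimizable to zero, the free parameter $R$ (equivalently the feedback action $K_2$) leaks no residual term into $J^*$, so that the optimal value is fully determined by the reference-tracking optimization over $Q$. Once that is confirmed, everything else is a reuse of the inner-outer factorization together with Lemmas \ref{le1} and \ref{le2} from the proof of Theorem \ref{th1} and of the preceding corollary, and no new argument is required.
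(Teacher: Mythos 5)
Your proposal is correct and matches the paper's intent: the paper states this corollary without proof as being ``immediately obtained'' from the preceding corollary, precisely because the noise-free channel makes $V=0$, hence $J_V\equiv 0$ and $J^*=J_U^*$, with $J_U^*$ already computed there via the factorization $\bigl[\sqrt{1-\epsilon}N_m^H~\sqrt{\epsilon}M_m^H\bigr]^H=\Lambda_i\Lambda_o$ and Lemmas \ref{le1} and \ref{le2}. Your write-up simply makes that implicit argument explicit, including the correct observation that $J_V$ vanishes identically in $R$, so no residual term survives the infimum.
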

\begin{remark}
If we do not consider the impact of the system control input, i.e. setting $\epsilon=0$, and $\sigma_j=1, (j=1,2,\cdots,l)$. From the expression in corollary \ref{co2}, it can be observed that for a feedback control system with a two-parameter compensators, when the tracking target is the Brownian  motion, the performance limitation depends on the nonminimum phase zeros, the plant gain at all frequencies
and their directions unitary vectors.
\end{remark}
\par
In what follows, we will discuss the relationship between stabilizability, the performance limits and channel characteristics under simplified conditions, we do some appropriate simplifications and assumptions. Consider the SISO system $P(s)$ in Fig.\ref{Fig.1} and the simplified performance $J$ as
\begin{equation}\label{eqq6}
J:=E\left\{\|r(t)-y_r(t)\|^2+\|y_n(t)\|^2\right\}.
\end{equation}
The relationship between the stabilizability, tracking performance and the channel signal-to-noise ratio (SNR) can be summarized as shown in the following theorem.
\begin{theorem}
Consider the feedback control system of Fig.{\ref{Fig.1}}. Suppose that $P(s)$ is a scalar transfer function. Under the assumptions in Theorem 1, the system $P(s)$ is stabilizable only if the admissible channel SNR satisfies
$$\frac{\mathcal{P}}{\gamma^2}>\sum_{i,j=1}^{n_p}\frac{{\bar{r}_i}{r_j}}{\bar{p}_i+p_j},$$
where $\mathcal{P}$ is the predetermined input power threshold. With the performance index (\ref{eqq6}), for the system to be stabilizable and obtain the optimal tracking performance, the channel SNR must satisfy
$$\frac{\mathcal{P}}{\gamma^2}>\sum_{i,j=1}^{n_p}\frac{{\bar{r}_i}{r_j}}{\bar{p}_i+p_j}+P_{Ad},$$
where
\begin{align}
{\rm{P_{Ad}}}=&\Bigg\|\sum_{i=1}^{n_p}N_{om}(p_i)N^{-1}(p_i)H(p_i)
\prod_{k=1,k\neq{i}}^{n_p}\tilde{B}^{-1}_k(p_i)\nonumber\\
&-\Bigg(\sum_{i=1}^{n_z+n_f}N_{om}(z_i)M^{-1}(z_i)H(z_i)\prod_{k=1,k\neq{i}}^{n_z+n_f}\hat{L}^{-1}_k(z_i)\nonumber\\ &+S\Bigg)N_m^{-1}\tilde{M}_m\Bigg\|_2^2,\label{eqp}
\end{align}
and the optimal tracking performance is given as
$$J^*=2\sigma^2\sum_{i=1}^{n_z+n_f}\frac{{\mathrm{Re}}(z_i)}{|z_i|^2}
+\gamma^2\sum_{i,j=1}^{n_z+n_f}\frac{\bar{\hat{r}}_i\hat{r}_j}{(\bar{z}_i+z_j)}
$$
where %\vspace{-1mm}
\begin{align*}
r_i=&{\rm{Res_{s=p_i}}}N_{om}(p_i)N^{-1}(p_i)H(p_i)\tilde{B}_i^{-1};\\
\hat{r}_i=&{\rm{Res_{s=z_i}}}N_{om}(z_i)M^{-1}(z_i)H(z_i)\hat{L}_i^{-1}.
\end{align*}
\end{theorem}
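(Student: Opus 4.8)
The plan is to read off the optimal value $J^*$ from Theorem~\ref{th1} and to obtain the two signal-to-noise-ratio (SNR) inequalities from a separate analysis of the power of the signal entering the ACGN channel. Since the index \eqref{eqq6} is exactly \eqref{eq5} with $\epsilon=0$ and $P(s)$ scalar, I would first invoke the decomposition $J^*=J_U^*+J_V^*$ established in the proof of Theorem~\ref{th1}. Setting $\epsilon=0$ kills the two control-input blocks, collapses the outer factor $\Theta_o$ to the minimum-phase part of the plant, and (after the $f(0)$-normalization) makes both the $s_i$-sum and the logarithmic integral vanish because $f(j\omega)/f(0)\equiv 1$; this leaves $J_U^*=2\sigma^2\sum_{i=1}^{n_z+n_f}\mathrm{Re}(z_i)/|z_i|^2$. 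The noise term $J_V^*$ is the SISO specialization of the $J_V^*$ computed in Theorem~\ref{th1}, whose residues at the nonminimum-phase zeros reduce to $\hat r_i=\mathrm{Res}_{s=z_i}N_{om}(z_i)M^{-1}(z_i)H(z_i)\hat L_i^{-1}$, giving $\gamma^2\sum_{i,j=1}^{n_z+n_f}\bar{\hat r}_i\hat r_j/(\bar z_i+z_j)$. Their sum is the claimed $J^*$, so the performance formula requires essentially no new computation.

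For the first (necessary) SNR inequality I would analyze the power of the channel-input signal, which internal stability forces to be finite and which the constraint requires to stay below the threshold $\mathcal P$. Writing the noise-to-channel-input map through the double Bezout identity \eqref{eq6} and the left factorization $PF=\tilde M^{-1}\tilde N$, the unstable poles $p_i$ of $P$ impose fixed interpolation constraints on this map via the allpass factor $\tilde B$ of \eqref{BBi}: its residues at the $p_i$ equal $r_i=\mathrm{Res}_{s=p_i}N_{om}(p_i)N^{-1}(p_i)H(p_i)\tilde B_i^{-1}$, independent of the free Youla parameter. Minimizing channel-input power over all stabilizing controllers (the parameter free to serve only stabilization) is then a model-matching problem whose infimum, by the residue formula for $H_2$ norms, is $\gamma^2\sum_{i,j=1}^{n_p}\bar r_i r_j/(\bar p_i+p_j)$. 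Requiring this infimal power to fall below $\mathcal P$ yields $\mathcal P/\gamma^2>\sum_{i,j}\bar r_i r_j/(\bar p_i+p_j)$ as a necessary condition for stabilizability.

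For the second inequality I would repeat the power computation, but now with the free parameter pinned down by the demand for optimal tracking, so that it can no longer be spent on minimizing channel-input power. Applying Lemma~\ref{le3} to separate the tracking-optimal response into its analytic remainder $S$ and the part that interpolates the nonminimum-phase zeros $z_i$ through the allpass factor $\hat L$, the channel-input power splits orthogonally in $H_2$ into the irreducible stabilization cost $\sum_{i,j}\bar r_i r_j/(\bar p_i+p_j)$ and an extra nonnegative term. That extra term is precisely the $H_2$ norm squared $P_{Ad}$ of \eqref{eqp}: the difference between the pole-side interpolant $\sum_i N_{om}(p_i)N^{-1}(p_i)H(p_i)\prod_{k\neq i}\tilde B_k^{-1}(p_i)$ and the zero-side tracking-optimal response built from $\hat L$ and $S$, transported between the two coordinate frames by the factor $N_m^{-1}\tilde M_m$. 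Enforcing $\mathcal P/\gamma^2$ above this enlarged quantity gives the second condition.

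The hard part is the third step, namely establishing the clean orthogonal $H_2$ decomposition that isolates $P_{Ad}$. The delicate points are identifying the exact channel-input signal whose power the SNR constraint governs, verifying that tracking optimality forces precisely the zero-side response assembled from $\hat L$ and the Lemma~\ref{le3} remainder $S$, and confirming that the change of variables introducing the $N_m^{-1}\tilde M_m$ multiplier is inner-preserving so that the cross terms between the pole-interpolation part and the additional part vanish in the $H_2$ inner product. Once that orthogonality is secured, both inequalities follow by requiring the respective minimal channel-input powers to lie below $\mathcal P$, and together with the value $J^*$ from the first paragraph the theorem is complete.
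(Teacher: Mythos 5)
Your proposal follows essentially the same route as the paper: the paper likewise obtains $J^*$ from the $J_U+J_V$ decomposition specialized to $\epsilon=0$ (recomputed via Lemma~\ref{le3}, the Bezout identity, and residue formulas), derives the stabilizability SNR bound by minimizing the noise-driven channel-input power $\|N_o(\tilde Y-R\tilde M)HV\|_2^2$ over the free Youla parameter after the $\tilde B$ allpass factorization and the $H_2\oplus H_2^\perp$ orthogonal split, and obtains $P_{Ad}$ by substituting the tracking-optimal $R$ (pinned down by the Lemma~\ref{le3} remainder $S$ and the zero-side interpolant, i.e.\ $R_1=N_{om}RN_mH$) into that same power expression. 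Your identification of the delicate points, including the orthogonality argument and the transport factor $N_m^{-1}\tilde M_m$, is exactly how the paper's proof proceeds.
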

\begin{proof}
Using the equation (\ref{eqq6}), similar to the proof of the theorem 1, we have
\begin{align*}
 J:=&\rm{tr}\big(R_{\hat{e}_r}(0)+R_{y_n}(0)\big)
 =\bigg\|T_{\hat{e}_r}U\frac{1}{s}\bigg\|_2^2+\bigg\|T_{y_n}V\bigg\|_2^2\\
 =&\bigg\|(1-NQ)U\frac{1}{s}\bigg\|_2^2+\bigg\|PM(X-RN)HV\bigg\|_2^2\\
 =&\bigg\|\left[({L}^{-1}-1)+(1-N_mQ)\right]U\frac{1}{s}\bigg\|_2^2\\
 &\qquad\qquad\qquad\qquad\quad+\bigg\|N_{om}(X\hat{L}^{-1}-RN_m)HV\bigg\|_2^2\\
 =&\bigg\|({L}^{-1}-1)U\frac{1}{s}\bigg\|_2^2+\bigg\|(1-N_mQ)U\frac{1}{s}\bigg\|_2^2\\
 &\qquad\qquad\qquad\qquad\quad+\bigg\|N_{om}(X\hat{L}^{-1}-RN_m)HV\bigg\|_2^2
  \end{align*}\vspace{-8mm}
  \begin{align}
    \nonumber=&2\sigma^2\sum_{i=1}^{n_z+n_f}\frac{{\mathrm{Re}}(z_i)}{|z_i|^2}\\
   &\qquad\qquad\qquad+\bigg\|N_{om}(X\hat{L}^{-1}-RN_m)HV\bigg\|_2^2.\label{eq12}
\end{align}
Based on the allpass factorization (\ref{eq3}) and Lemma \ref{le3},
we can write
\begin{multline*}
N_{om}XH\hat{L}^{-1}=S+\sum_{i=1}^{n_z+n_f}N_{om}(z_i)\\
\times X(z_i)H(z_i)\hat{L}^{-1}(z_i)\hat{L}_i(z_i)\hat{L}_i^{-1},
\end{multline*}
where $S\in\mathbb{R}\mathcal{H}_\infty$.
Then
\begin{align}
\nonumber&\bigg\|N_{om}(X\hat{L}^{-1}-RN_m)HV\bigg\|_2^2\\
\nonumber=&\gamma^2\bigg\|R_1-N_{om}RN_mH+\sum_{i=1}^{n_z+n_f}N_{om}(z_i)X(z_i)\\
\nonumber&\qquad\qquad\times H(z_i)\hat{L}_i^{l}(z_i)\hat{L}_i^r(z_i)\left(\hat{L}_i^{-1}-\hat{L}_i^{-1}(\infty)\right)\bigg\|\\
\nonumber=&\gamma^2\bigg\|R_1-N_{om}RN_mH\bigg\|_2^2
+\gamma^2\bigg\|\sum_{i=1}^{n_z+n_f}O_1(z_i)\\
&\label{eq14}\qquad\quad\times \hat{L}_i^{l}(z_i)\hat{L}_i^r(z_i)\left(\hat{L}_i^{-1}-\hat{L}_i^{-1}(\infty)\right)\bigg\|
\end{align}
where
\par
$\qquad\quad\begin{array}{l}
         O_1(z_i)=N_{om}(z_i)X(z_i)H(z_i), \\
         \hat{L}^l_i(z_i)=\hat{L}_1^{-1}(z_i)\hat{L}_2^{-1}(z_i)\cdots{\hat{L}_{i-1}^{-1}(z_i)}, \\
         \hat{L}^r_i(z_i)=\hat{L}_{i+1}^{-1}(z_i)\hat{L}_{i+2}^{-1}(z_i)\cdots{\hat{L}_{n_z}^{-1}(z_i)},
       \end{array}$ \vspace{-2mm}
\begin{align}\label{eq16}
%\nonumber O_1(z_i)=&N_{om}(z_i)X(z_i)H(z_i)\\
%\nonumber \hat{L}^l_i(z_i)=&\hat{L}_1^{-1}(z_i)\hat{L}_2^{-1}(z_i)\cdots{\hat{L}_{i-1}^{-1}(z_i)},~~\\
%\nonumber \hat{L}^r_i(z_i)=&\hat{L}_{i+1}^{-1}(z_i)\hat{L}_{i+2}^{-1}(z_i)\cdots{\hat{L}_{n_z}^{-1}(z_i)}.\\
R_1(s)=&S+\sum_{i=1}^{n_z+n_f}O_1(z_i)\hat{L}_i^{l}(z_i)\hat{L}_i^r(z_i).
\end{align}
By using the Bezout identity
$XM-YN=1,$
$O_1(z_i)$ can be written as
\begin{equation}\label{eq15}
O_1(z_i)=N_{om}(z_i)M^{-1}(z_i)H(z_i).
\end{equation}
From equations (\ref{eq12}),(\ref{eq14}) and (\ref{eq15}), we have
\begin{align*}
J^*=&2\sigma^2\sum_{i=1}^{n_z+n_f}\frac{{\mathrm{Re}}(z_i)}{|z_i|^2}
+\gamma^2\sum_{i,j=1}^{n_z+n_f}\frac{4Re(z_i)Re(z_j)}{\bar{z}_i+z_j}\\
&\times \big(O_1(z_i)\hat{L}_i^l(z_i)\hat{L}_i^r(z_i)\big)^H O_1(z_i)\hat{L}_i^l(z_i)\hat{L}_i^r(z_i)\\
=&2\sigma^2\sum_{i=1}^{n_z+n_f}\frac{{\mathrm{Re}}(z_i)}{|z_i|^2}
+\gamma^2\sum_{i,j=1}^{n_z+n_f}O_1^H(z_i)O_1(z_i)\\
&\qquad\qquad\times \frac{4{\rm{Re}}(z_i){\rm{Re}}(z_j)}{(\bar{z}_i+z_j)} \prod_{k=1,k\neq{i}}^{n_z}\frac{(\bar{z}_k+{z}_i)(z_k+\bar{z}_j)}{(\bar{z}_k-\bar{z}_i)(z_k-z_j)}\\
=&2\sigma^2\sum_{i=1}^{n_z+n_f}\frac{{\mathrm{Re}}(z_i)}{|z_i|^2}
+\gamma^2\sum_{i,j=1}^{n_z+n_f}\frac{\bar{\hat{r}}_i\hat{r}_j}{(\bar{z}_i+z_j)}
\end{align*}
where $\hat{r}_i$ is the residue of $O_1(p_i)\hat{L}^{-1}(s)$ at $s=z_i$.\par
In addition, suppose that the input $r(t)=0$. The channel input is required to satisfy the power constraint
$\|u\|_{\rm{Pow}}<\mathcal{P}$
for some predetermined input power level $\mathcal{P}>0.$
\begin{align*}
\|u(t)\|_{\rm{Pow}}=&E[u^T(t)u(t)]={\rm{tr}}[R_{u_n}(0)]\\
\nonumber=&\int_{-\infty}^{+\infty}{\rm{tr}}\left(T_{u_n}S_n(jw)T_{u_n}^T\right){\rm{d}}w\\
\nonumber=&\|T_{u_n}V\|_2^2=\|N_{o}(\tilde{Y}-R\tilde{M})HV\|_2^2\\
\nonumber=&\gamma^2\|N_{om}\big(YH\tilde{B}^{-1}-RH\tilde{M}_m\big)\|_2^2\\
\nonumber=&\gamma^2\Bigg\|\sum_{i=1}^{n_p}N_{om}(p_i)Y(p_i)H(p_i)\prod_{k=1,k\neq{i}}^{n_p}\tilde{B}^{-1}_k(p_i)
\end{align*}
\begin{align}
\nonumber&\times [\tilde{B}_i^{-1}-\tilde{B}_i^{-1}(\infty)]\Bigg\|_2^2+
\gamma^2\Bigg\|\sum_{i=1}^{n_p}N_{om}(p_i)Y(p_i)H(p_i)\label{eq18}\\
&\times \prod_{k=1,k\neq{i}}^{n_p}\tilde{B}^{-1}_k(p_i)
-N_{om}RH\tilde{M}_m\Bigg\|_2^2.
\end{align}
When only the stabilizability is considered regardless of the tracking performance, we have
\begin{align}\label{eq19}
\nonumber&\|u(t)^*\|_{\rm{Pow\_S}}
=\gamma^2\Bigg\|\sum_{i=1}^{n_p}N_{om}(p_i)Y(p_i)H(p_i)\prod_{k=1,k\neq{i}}^{n_p}\tilde{B}^{-1}_k(p_i)\\
&\qquad\qquad\times [\tilde{B}_i^{-1}-\tilde{B}_i^{-1}(\infty)]\Bigg\|_2^2
=\gamma^2\sum_{i,j=1}^{n_p}\frac{{\bar{r}_i}{r_j}}{\bar{p}_i+p_j}
\end{align}
where $r_i={\rm{Res_{s=p_i}}}N_{om}(p_i)N^{-1}(p_i)H(p_i)\tilde{B}_i^{-1}$
is the residue of $N_{om}(p_i)N^{-1}(p_i)H(p_i)\tilde{B}_i^{-1}$ at $s=p_i$.
Therefore, for the feedback system to be stabilizable, the channel SNR must satisfy
\begin{align}\label{eq17}
\frac{\mathcal{P}}{\gamma^2}>\sum_{i,j=1}^{n_p}\frac{{\bar{r}_i}{r_j}}{\bar{p}_i+p_j}.
\end{align}
This is the result of \cite{rojas2008fundamental}. However, in many cases, not only the stabilizability needs to be considered, but also the system tracking performance. In this case, via noting equations (\ref{eq16}),(\ref{eq18}) and (\ref{eq19}), we have
\begin{align*}
&\|u^*(t)\|_{\rm{Pow\_SL}}=\gamma^2\Bigg\|\sum_{i=1}^{n_p}N_{om}(p_i)Y(p_i)H(p_i)\\
&\quad\times \prod_{k=1,k\neq{i}}^{n_p}\tilde{B}^{-1}_k(p_i)
-N_{om}RH\tilde{M}_m\Bigg\|_2^2+\gamma^2\sum_{i,j=1}^{n_p}\frac{{\bar{r}_i}{r_j}}{\bar{p}_i+p_j}\\
&\qquad\qquad\qquad=\|u^*(t)\|_{\rm{Pow\_S}}+\gamma^2{\rm{P_{Ad}}}.
\end{align*}
where ${\rm{P_{Ad}}}$ is given by equation (\ref{eqp}).
Then,
\begin{align*}
\frac{\mathcal{\mathcal{P}}}{\gamma^2}>\|u^*(t)\|_{\rm{Pow\_S}}/{\gamma^2}+{\rm{P_{Ad}}}.
\end{align*}
The proof is now completed.
\end{proof}
\begin{remark}
Theorem 2 shows that for a system to achieve for a best tracking performance in addition to stabilization, its signal-to-noise ratio must be greater than that required only for stabilization.
\end{remark}
%\vspace{-6pt}
\begin{figure}
\centering
  \includegraphics[width=8.8cm]{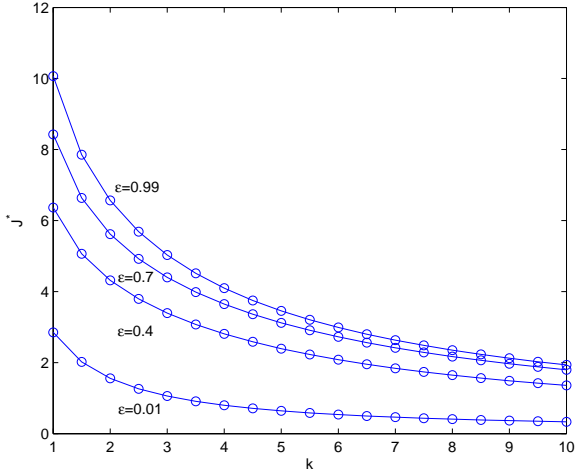}
  \caption{$J^*$ with respect to $k$ for different $\epsilon.(f=3,h=4,\sigma=1,\gamma=0.8)$}\label{Fig.3}
\end{figure}
\begin{figure}
\centering
  \includegraphics[width=8.8cm]{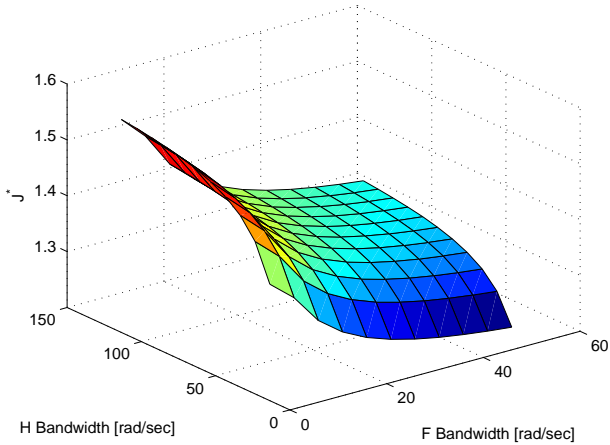}
  \caption{$J^*$ with respect to $F$ and $H.(k=2,\epsilon=0.5,\sigma=1,\gamma=0.8)$}\label{Fig.4}
\end{figure}
\begin{figure}
\centering
  \includegraphics[width=8.8cm]{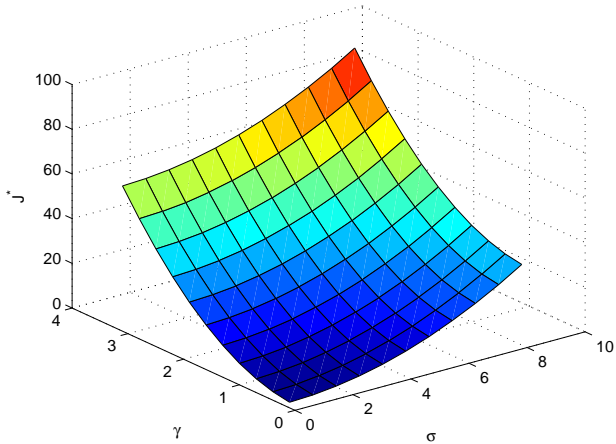}
  \caption{$J^*$ with respect to $\sigma$ and $\gamma.(k=2,\epsilon=0.5,f=3,h=4)$}\label{Fig.5}
\end{figure}

\section{Simulation studies}
Consider the plant
$$P={(s-k)}/{(s(s+1))}.$$
The LTI filters used to model the finite bandwidth $F(s)$ and colored noise $H(s)$ of the communication link are both chosen to be low-pass Butterworth filters of order 1.
$$F(s)=f/(s+f),~~~H(s)=h/(s+h),$$
where $k\in[1,10]$, and $f>0,~h>0.$

Clearly, $P(s)$ is of minimum phase. Fig.\ref{Fig.3} shows the optimal performances plotted for different values of $\epsilon$. Two observations can be obtained from Fig.\ref{Fig.4}, where the optimal performance is plotted with respect to bandwidth of both F(s) and H(s). First, the system tracking performance becomes better as the available bandwidth of the communication channel decreases. Secondly, if the noise is colored by a low pass filter, the decrease of its cutoff frequency would lead to the better tracking performance. Fig.\ref{Fig.5} shows that the reference signal and ACGN will deteriorate tracking performance.

\section{Conclusions}
In this paper, we have investigated the best attainable tracking performance of networked MIMO control systems in tracking the Brownian motion over a limited bandwidth and additive colored white Gaussian noise channel. We have derived explicit expressions of the best performance in terms of the tracking error and the control input energy. It has been shown that, due to the existence of the network, the best achievable tracking performance will be adversely affected by several factors, such as the nonminimum phase zeros and their directions of the plant, the colored additive white Gaussian noise, the basic network parameters, such as bandwidth. Finally, some simulation results are given to illustrate the obtained results.\par
Furthermore, one possible future work is to consider more realistic network-induced constraints, such as time-delay and dropout issues which is much more challenging. When the networked control system contains the nondeterministic or hybrid switching \cite{zhguan2005, Zhang2011Asynchronous, Zhang2010Necessary}, the issue of tracking performance also deserves further study.
\ifCLASSOPTIONcaptionsoff
  \newpage
\fi

% biography section
%
% If you have an EPS/PDF photo (graphicx package needed) extra braces are
% needed around the contents of the optional argument to biography to prevent
% the LaTeX parser from getting confused when it sees the complicated
% \includegraphics command within an optional argument. (You could create
% your own custom macro containing the \includegraphics command to make things
% simpler here.)
%\begin{biography}[{\includegraphics[width=1in,height=1.25in,clip,keepaspectratio]{mshell}}]{Michael Shell}
% or if you just want to reserve a space for a photo:
%Biography text here.

\begin{IEEEbiography}[{\includegraphics[width=1in,height=1.25in,clip,keepaspectratio]{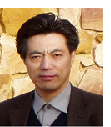}}]{Zhi-Hong Guan}
received PhD degree in Automatic Control Theory and Applications from the South China University of Technology, Guangzhou, China, in 1994. He was a Full Professor of Mathematics and Automatic Control with the Jianghan Petroleum Institute, Jingzhou, China in 1994. Since December 1997, he has been Full Professor of the Department of Control Science and Engineering, Executive Associate Director of the Centre for Nonlinear and Complex Systems and Director of the Control and Information Technology in the Huazhong University of Science and Technology (HUST), Wuhan, China.
\par
Since 1999, he has held visiting positions at the Harvard University, USA, the Central Queensland University, Australia, the Loughborough University, UK, the National University of Singapore, the University of Hong Kong, and the City University of Hong Kong. Currently, he is the Associate Editor of the "Journal of Control Theory and Applications", the international "Journal of Nonlinear Systems and Application", and severs as a Member of the Committee of Control Theory of the Chinese Association of Automation, Executive Committee Member and also Director of the Control Theory Committee of the Hubei Province Association of Automation. His research interests include complex systems and complex networks, impulsive and hybrid control systems, networked control systems, multi-agent systems.
\end{IEEEbiography}

\begin{IEEEbiography}[{\includegraphics[width=1in,height=1.25in,clip,keepaspectratio]{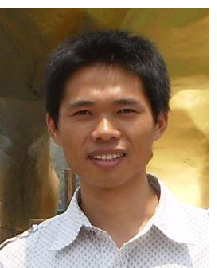}}]{Chao-Yang Chen} was born in Hunan, China, 1984. He graduated in mathematics from Hunan University of Science and Technology, Xiangtan, China, in 2003, and received the M.S. degree at Department of Mathematics in Guangxi Teachers Education University. Currently, he is working towards the Ph.D. Degree at the Department of Control Science and Engineering, Huazhong University of Science and Technology, Wuhan, China. His research interests include networked control systems, complex dynamical networks, impulsive and hybrid control systems.
\end{IEEEbiography}

\begin{IEEEbiography}[{\includegraphics[width=1in,height=1.25in,clip,keepaspectratio]{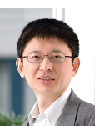}}]{Gang Feng} received the B.Eng and M.Eng. degrees in Automatic Control from Nanjing Aeronautical Institute, China in 1982 and in 1984 respectively, and the Ph.D. degree in Electrical Engineering from the University of Melbourne, Australia in 1992.
\par
He has been with City University of Hong Kong since 2000 where he is at present a Chair Professor. He is a ChangJiang Chair professor at Nanjing University of Science and Technology, awarded by Ministry of Education, China. He was lecturer/senior lecturer at School of Electrical Engineering, University of New South Wales, Australia, 1992-1999. He was awarded an Alexander von Humboldt Fellowship in 1997-1998, and the IEEE Transactions on Fuzzy Systems Outstanding Paper Award in 2007. His current research interests include piecewise linear systems, and intelligent systems \& control.
\par
Prof. Feng is an IEEE Fellow, an associate editor of IEEE Trans. on Fuzzy Systems, and was an associate editor of IEEE Trans. on Systems, Man \& Cybernetics, Part C, Journal of Control Theory and Applications, and the Conference Editorial Board of IEEE Control System Society.
\end{IEEEbiography}

\begin{IEEEbiography}[{\includegraphics[width=1in,height=1.25in,clip,keepaspectratio]{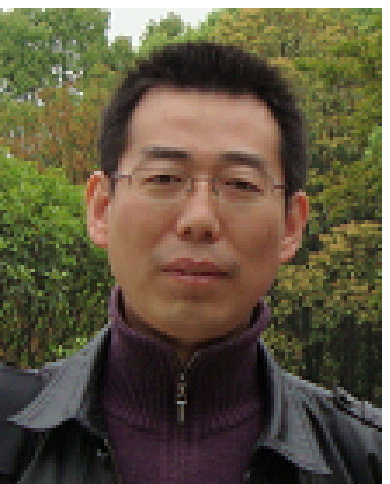}}]{Tao Li} received the Ph.D degree in Huazhong University of Science and Technology, Wuhan, China, in 2008. He is also currently an Associate Professor in the College of Electronics and Information, Yangtze University, Jingzhou, China. His current research interests include nonlinearity complex network systems, complex network theory \& application, complex networks spreading dynamics.
\end{IEEEbiography}

% \begin{IEEEbiography}{Michael Shell}
% Biography text here.
% \end{IEEEbiography}

% if you will not have a photo at all:
% \begin{IEEEbiographynophoto}{John Doe}
% Biography text here.
% \end{IEEEbiographynophoto}

% insert where needed to balance the two columns on the last page with
% biographies
%\newpage

%\begin{IEEEbiographynophoto}{Jane Doe}
%Biography text here.
%\end{IEEEbiographynophoto}

\end{document}